\def\section{\@startsection {section}{1}{\z@}{-2.5ex plus -1ex minus
 -.2ex}{1.3ex plus .2ex}{\large\bf}}
\def\subsection{\@startsection{subsection}{2}{\z@}{-2.25ex plus%
 -1ex minus -.2ex}{0.5ex plus .2ex}{\bf}}
\newcommand{\R}{\mathbb{R}}
\newcommand{\N}{\mathbb{N}}
\newcommand{\C}{\mathbb{C}}
\newcommand{\Z}{\mathbb{Z}}
\newcommand{\Il}{\mathcal{I}_{\ell}}
\newcommand{\AdS}{\text{AdS}_3}
\newcommand{\AAdS}{\widetilde{\text{AdS}}_3}
\newcommand{\bee}{\begin{equation}}
\newcommand{\eee}{\end{equation}}
\newtheorem{thm}{Theorem}[section]
\newtheorem{prop}[thm]{Proposition}
\newtheorem{cor}[thm]{Corollary} 
\newtheorem{dfn}[thm]{Definition}
\newtheorem{lem}[thm]{Lemma}
\theoremstyle{definition}
\newcommand{\lh}{\pi}
\newcommand{\lf}[1]{\sigma^{ #1}}
\numberwithin{equation}{section}
\begin{document}

\begin{flushright}
EMPG-18-10
\end{flushright}
\vskip 10pt
\baselineskip 28pt

\begin{center}
{\Large \bf Hyperbolic vortices and Dirac fields in 2+1 dimensions}

\baselineskip 18pt

\vspace{0.4 cm}

{\bf Calum Ross and Bernd J.~Schroers}\\
\vspace{0.2 cm}
Maxwell Institute for Mathematical Sciences and
Department of Mathematics,
\\Heriot-Watt University,
Edinburgh EH14 4AS, UK. \\
{\tt cdr1@hw.ac.uk} and {\tt b.j.schroers@hw.ac.uk} \\

\vspace{0.4cm}

{June 2018} 
\end{center}

\baselineskip 16pt
\parskip 4 pt

\parindent 10pt

\begin{abstract}
\noindent 
Starting from the geometrical interpretation of integrable vortices on two-dimensional hyperbolic space as conical singularities,  we explain how this picture can be expressed in the language of Cartan connections, and how it  can be lifted to the double cover of three-dimensional  Anti-de Sitter space viewed as a trivial circle bundle over hyperbolic space.  
We show that vortex configurations  on the double cover of AdS  space  give  rise to solutions of the Dirac equation  minimally coupled to the magnetic field of the vortex. After stereographic projection to (2+1)-dimensional Minkowski space we  obtain, from each lifted hyperbolic vortex, a  Dirac field and an abelian gauge field  which solve a   Lorentzian, (2+1)-dimensional version of the Seiberg-Witten equations. 
\end{abstract}

\noindent

\section{Introduction}

Vortex configurations consisting of a complex scalar and an abelian gauge field  can be given a geometrical interpretation 
 in two dimensions by viewing the modulus of the scalar field as a conformal rescaling of the  underlying two-dimensional geometry. As pointed out in \cite{Baptista}, this  is  particularly natural for vortices defined on  K\"ahler  geometries and obeying first-order Bogomol'nyi equations. In that case, the  rescaled metric has conical singularities at the vortex locations but away from those singularities its curvature can be expressed very simply in terms  of the  original K\"ahler form and curvature, and the rescaled  K\"ahler form. 
 
 When the first order vortex equations are integrable, the geometrical picture   simplifies further.  The  Popov vortex equations on the sphere \cite{Popov}, for example, can be solved in terms of a rational map \cite{Manton1}. The rescaled metric is then the pullback of the round metric on the sphere via the rational map. It  has conical singularities at the ramification points, and these  are precisely the locations of the vortices. 
 
 Recently, we showed in \cite{RS} that  one gains further insight into the geometry of   Popov vortices  by considering the  total space $S^3\simeq SU(2)$ of the circle  bundle on which they are defined. We showed that the equivariant formulation of the vortex equation on $S^3$ can be solved in terms of bundle maps  of the Hopf fibration,  and that each vortex configuration gives rise to a solution of the gauged Dirac equation on $S^3$ and, after stereographic projection, on Euclidean space.  In this way, equivariant Popov vortices on $SU(2)$  can be seen to generate manifestly square integrable and smooth solutions of the magnetic zero-mode problem posed
 and studied by Loss and Yau \cite{LY}. This picture also  provides a geometrical context for earlier results  on magnetic zero-modes by Adam, Muratori and Nash \cite{AMN1,AMN2}.

The goal of the present work is to  apply the geometrical concepts and  methods   of  \cite{RS} to  the integrable vortex equations on hyperbolic space  $H^2$, described here in terms of the Poincar\'e disk model. The role of the total space is played by the double cover   $\AAdS$ of Anti-de Sitter space, which is isomorphic to $SU(1,1)$.  The  analogue of the Hopf fibration is the (topologically trivial)  circle fibration   $\lh:SU(1,1)\to H^{2}$.  Proceeding as in \cite{RS}, we  show that hyperbolic vortices can be lifted to vortex configurations on $\AAdS$, and that they give rise to solutions of the gauged Dirac equation on $\AAdS$ and a non-linear constraint.   Using a  suitably defined stereographic projection we finally obtain, from each lifted hyperbolic vortex, a   solution of a  Lorentizan and (2+1)-dimensional version of the Seiberg-Witten equations,  consisting of 
the gauged Dirac equation  and an equation relating the magnetic field to the spin density. 

The standard geometry of the three dimensional group manifold $SU(1,1)$ plays a central role in this paper. The lifted vortex configurations can be expressed in terms of  bundle maps of $\lh:SU(1,1)\to H^{2}$,  covering holomorphic maps $H^{2}\to H^{2}$. These bundle maps can in turn be written in terms of two holomorphic functions satisfying an equivariance condition; they  fully determine the vortices, the magnetic fields and the  Dirac fields.

As in \cite{RS}, we have found it illuminating to express the geometry defined by a vortex configuration in  the language of Cartan geometry. This provides the simplest route to three dimensions, and relates the vortex equations to a flatness condition for a non-abelian connection.  The fact that $SU(1,1)$ is a trivial circle bundle over  $H^{2}$ simplifies the discussion, but the non-trivial first fundamental group of $SU(1,1)$ and  non-compactness of $H^2$ add subtleties compared to the Euclidean story, as we will explain.

The paper is organised as follows. 
We begin in Sect.~2 by summarising some results on hyperbolic vortices and their geometric interpretation. Then, following a brief interlude to establish our conventions for the pseudo-unitary group $SU(1,1)$, we present a Cartan geometry interpretation of hyperbolic vortices. In particular we construct an explicit $su(1,1)$ gauge potential  whose flatness is equivalent to the hyperbolic vortex equations.

Sect.~3 introduces the three dimensional setting, the relationship between $\AdS$ and the Lie group $SU(1,1)$ and an interpretation of both  as  circle bundles over $H^{2}$. The main result of this section is the equivalence between vortex configurations on $SU(1,1)$ and flat $SU(1,1)$ gauge potentials,  and  expressions for both of these in terms of bundle maps $SU(1,1)\to SU(1,1)$. At this stage the hyperbolic story is arguably richer than its spherical counter part and we are not immediately forced to consider vortex configurations of finite degree. However, configurations  with finite equivariant degree  give rise to finite charge vortices on $H^{2}$, 
and we therefore pay them special attention. 

In Sect.~4 we introduce stereographic and gnomonic projections from $SU(1,1)$ to three-dimensional Minkowski space $\R^{1,2}$, and use them  
to relate the  gauged Dirac equation on  $SU(1,1)$ and  on the interior $\Il\subset\R^{1,2}$ of a single-sheeted hyperboloid. 
Then, in Sect.~5, we  combine  the results of all preceding sections to construct solutions of a Lorentzian version of the Seiberg-Witten equations  on $\AAdS$ and on (2+1)-dimensional  Minkowski space.
Finally Sect.~6 contains a summary  of the paper in the form of   Fig.~\ref{summary}, and a discussion.

\section{Hyperbolic vortices and Cartan geometry}

\subsection{Hyperbolic vortices and holomorphic maps}
\label{hyphol}
First order vortex equations  on the Poincar\'e disk model of hyperbolic space have been studied extensively,  with many of the details summarised in \cite{MS}. Solutions can be obtained from $SO(3)$ invariant instantons on $S^{2}\times H^{2}$ \cite{Witten1} and expressed in terms of holomorphic mappings of the disk.   We briefly review these solutions here, but should warn the reader  
  that our  disk  has radius $1$ rather than $\sqrt{2}$  as chosen in \cite{MS}, and  that we write the equations in terms of the Riemann curvature form rather than the K\"ahler form for the disk. 
  
  We first introduce our notation for the geometry of the Poincar\'e disk model, which we write as 
\begin{equation}
H^{2}=\{z\in \C| |z|^{2}<1\}, \label{diskmodeldef}
\end{equation}
with  metric 
\begin{equation}
ds^{2}=\frac{4 dzd\bar{z}}{(1-|z|^{2})^{2}}. \label{H^2 metric}
\end{equation}
A complexified orthonormal frame field $e=e_{1}+ie_{2}$ for this metric is
\begin{equation}
e=\frac{2}{1-|z|^{2}}dz, \quad \bar{e}=\frac{2}{1-|z|^{2}}d\bar{z}. \label{complex frame}
\end{equation}
This metric is K\"ahler with K\"ahler form
\begin{equation}
\omega=\frac{i}{2}e\wedge\bar{e}=2i\frac{dz\wedge d\bar{z}}{(1-|z|^{2})^{2}}. \label{kahler form}
\end{equation}
In terms of the complexified frame the structure equations are given by
\begin{equation}
de-i\Gamma\wedge e=0 \label{structureeq}
\end{equation}
and its complex conjugate.
The structure equations determine the spin connection $1$-form $\Gamma$ to be
\begin{equation}
\Gamma=i\frac{\bar{z}dz-zd\bar{z}}{1-|z|^{2}}. \label{spinconnection}
\end{equation}
The Riemann curvature form is 
\begin{equation}
\mathcal{R}=d\Gamma,
\end{equation}
and is related to the Gauss curvature, $K$, and the K\"ahler form, equation \eqref{kahler form}, through the Gauss equation,
\begin{equation}
\mathcal{R}=K\omega, \label{Gauss}
\end{equation}
with $K=-1$ for $H^{2}$. 

A hyperbolic vortex is a pair $(\phi,a)$ where $a$ is a connection on a principal $U(1)$ bundle over $H^{2}$, which is necessarily trivial, and $\phi$ is a smooth section of the associated complex line bundle. Taking $a=a_{z}dz+a_{\bar{z}}d\bar{z}$ and $F_{a}=da$, the vortex equations are
\begin{equation}
\partial_{\bar{z}}\phi-ia_{\bar{z}}\phi=0, \quad F_{a}=(|\phi|^{2}-1)\mathcal{R}. \label{hyperbolic vortex}
\end{equation} 
The first of these requires that $\phi$ be holomorphic with respect to the connection $a$.

It is instructive to compare this equation to a  different vortex equation, called the Popov vortex equation, which was introduced in \cite{Popov} and studied in \cite{Manton1}. A Popov vortex is a pair $(\phi, a)$ of a connection $a$ on a degree $2N-2$ principal $U(1)$-bundle  over the two-sphere and a holomorphic section $\phi$ of an associated line bundle. In terms of  a stereographic coordinate $z$ and curvature two-form $\mathcal{R}_{S^{2}}$,  the  equations are
\begin{equation}
\partial_{\bar{z}}\phi-ia_{\bar{z}}\phi=0, \quad F_{a}=\left(|\phi|^{2}-1\right)\mathcal{R}_{S^{2}}. \label{Popov vortex equation}
\end{equation}
 In the form presented here, these equations look the same  as the hyperbolic equations  since  the sign difference due to the Gauss curvature is absorbed into the Riemann curvature two-form. From now on we work with  hyperbolic vortices but make frequent comparisons with Popov vortices.

Solutions to the hyperbolic vortex equations can be constructed from holomorphic functions $f:H^{2}\to H^{2}$ in the following way. Given the complex frame and spin connection, $e,\Gamma$ on $H^{2}$, we  pull them back via $f$ and define the Higgs field and gauge potential through
\begin{equation}
\phi e = f^{*}e, \quad a=f^{*}\Gamma-\Gamma,\label{vortexpair}
\end{equation}
so that we have an explicit expression for $\phi$ in terms of $f$:
\bee
\label{phif}
\phi=\frac{1-|z|^{2}}{1-|f|^{2}}f'.
\eee
It  also follows that
\begin{equation}
f^{*}(e\wedge \bar{e})=|\phi|^{2}e\wedge\bar{e},
\end{equation}
so that the second vortex equation is a direct consequence:
\begin{equation}
F_{a}=d(f^{*}\Gamma-\Gamma)=f^{*}\mathcal{R}-\mathcal{R}=(|\phi|^{2}-1)\mathcal{R}.
\end{equation}
To see that $\phi$ is indeed covariantly holomorphic with respect to $a$ as the first vortex equation requires, consider the pullback of the structure equation, \eqref{complex structure eq}:
\begin{align}
0	&=df^{*}e-if^{*}\Gamma\wedge f^{*}e \nonumber \\
	&=\left(de-i\Gamma\wedge e\right)\phi+\left(d\phi-i\left(f^{*}\Gamma-\Gamma\right)\phi\right)\wedge e \nonumber \\
	&=\left(d\phi-ia\phi\right)\wedge e.
\end{align}
The final line is  the  required holomorphicity condition, and equivalent to the first vortex equation.

The pulled back frame $f^{*}e$ degenerates  at the zeros of $\phi$. This corresponds to the vortex positions becoming conical singularities of the rescaled metric,  with an angular excess related to the charge of the vortex, see  \cite{Manton2}.  A consequence of the frame being degenerate is that its spin connection, $\tilde{\Gamma}$, has singularities. However, $f^{*}\Gamma$ is the pullback of a smooth spin connection with a smooth map,  so is in particular  non-singular. The difference between the pulled back spin connection and the spin connection of the degenerate frame is due to singularities at the zeros of $\phi$. This results in $\tilde{\mathcal{R}}$ being equal to $f^{*}\mathcal{R}$ up to the addition of delta function singularities at the zeros $z_{j}$ of $\phi$:
\begin{equation}
\label{riemanndelta}
\tilde{\mathcal{R}}=f^{*}\mathcal{R}-2\pi\sum_{j}\delta_{z_{j}}.
\end{equation}

One can view Riemann surfaces of genus $g>1$  as the quotient of $H^{2}$ by  the action of a  Fuchsian group $\Gamma< SU(1,1)$. Vortices on  such Riemann surfaces can therefore  be constructed  from vortices on $H^{2}$ that are invariant under the action of the desired Fuchsian group. In practice, this is not easy. A vortex solution on the Bolza surface (genus $2$) is presented in \cite{MM}. While these vortices have an infinite number of zeros of the Higgs field on $H^{2}$ they have a finite number of zeros within the principal domain of the Fuchsian group.

In the construction of   vortices from  holomorphic maps between compact Riemann surface via  \eqref{vortexpair}, the Gauss-Bonnet theorem imposes a constraint on the genus of the surfaces and the vortex number, see  \cite{Baptista}. The negative contribution from the  singularities in the curvature  $\tilde{\mathcal{R}}$, \eqref{riemanndelta},  at the zeros of $\phi$ plays a key role here, and provides a no-go theorem in some cases.

We now specialise to the case of solutions of the vortex equations on   $H^{2}$ which satisfy the boundary condition $|\phi|\to 1$ as $|z|\to 1$. As explained in \cite{MS, JT}, this boundary condition is required for the vortex to have finite energy. It also means that $\phi$  has a finite  vortex  number associated to it,  which is the number of zeros counted with multiplicity. This is analogous to the degree of the line bundle for the case of Popov vortices mentioned above. 

As was first observed in \cite{Witten1}, solutions of the hyperbolic vortex equations on $H^{2}$ which satisfy the boundary condition are obtained from bounded holomorphic functions $f:H^{2}\to H^{2}$ which can be expressed as a finite Blaschke product when working in the Poincar\'e disk model. 
For a $(N-1)$-vortex solution,  the Blaschke product  can be written as the ratio
\begin{equation}
f=\frac{f_{2}}{f_{1}} \label{fblaschke}
\end{equation}
of the two holomorphic functions 
\begin{equation}
 f_{1}(z)=\prod_{k=1}^{N}\left(1-\bar{c}_k z\right), \quad f_{2}(z)=\prod_{k=1}^{N}\left(z-c_k\right),
\end{equation}
where $c_k\in H^{2}$, $k=1,\ldots,N$.
As the zeros of $f_{2}$ are in the disk of radius $1$ the zeros of $f_{1}$, at $1/\bar{c}_k$, are not. Thus $f$ has zeros but no poles within the disk. Note that $|f(z)|=1$ when $|z|=1$, i.e., on the boundary of the disk and that, by the maximum principle,  $|f(z)|<1$ when $|z|<1$, so that $f$ really is a holomorphic mapping of the disk model.

This way of writing the holomorphic function  will prove useful later when we introduce and discuss vortex configurations on $SU(1,1)$,  as will the observation about the lack of poles.
The  pullback of the holomorphic frame field  has the explicit form 
\begin{equation}
 f^*e= \phi e=2i\frac{f_{2}'f_{1}-f_{1}'f_{2}}{|f_{1}|^{2}-|f_{2}|^{2}}\frac{\bar{f}_{1}}{f_{1}}dz.
\end{equation}
This is a manifestly  smooth one-form which vanishes at the zeros of $\phi$,  thus  illustrating our earlier remarks about the pullback frame.

\subsection{Interlude on \texorpdfstring{$SU(1,1)$}{SU(1,1)}}
\label{su11conventions}
Before we interpret hyperbolic vortices in terms of Cartan geometry we need to make clear our conventions for the pseudo-unitary group $SU(1,1)$. It is defined as the subgroup of $SL(2,\C)$ whose elements $h$ satisfy
\begin{equation}
h\tau_{3}h^{\dagger}=\tau_{3},
\end{equation}
where $\tau_{3}$ is the third Pauli matrix.
Its Lie algebra $su(1,1)$ is defined as the  set of complex traceless matrices satisfying
\begin{equation}
g^{\dagger}=-\tau_{3}g\tau_{3}.
\end{equation}
This forces the diagonal elements to be purely imaginary and the off-diagonal elements to be mutually complex-conjugate. 
We work with the generators
\begin{equation}
t_{0}=\frac{i}{2}\tau_{3}, \quad t_{1}=-\frac{1}{2}\tau_{2}, \quad t_{2}=\frac{1}{2}\tau_{1},
\end{equation}
where the $\tau_{i}$ are the Pauli matrices. They obey the commutation relation
\begin{equation}
[t_{i},t_{j}]=\varepsilon_{ij}^{\;\;\;k}t_{k}.
\end{equation}
We also frequently use
\begin{equation}
t_{+}=t_{1}+it_{2}, \quad t_{-}=t_{1}-it_{2},
\end{equation}
which have commutators
\begin{equation}
[t_{+},t_{-}]=-2it_{0}, \quad [t_{0},t_{\pm}]=\pm it_{\pm},
\end{equation}
showing that  $t_{0}$ acts as  a complex structure on its complement in $su(1,1)$, with $t_{+}(t_{-})$  as (anti)-holomorphic directions. 

The Killing form on $su(1,1)$ is
\begin{equation}
\kappa_{ij}=\kappa(t_{i},t_{j})=\frac{1}{2}\varepsilon_{ikm}\varepsilon_{j}^{\;\;km}=\eta_{ij},
\end{equation} 
where $\eta$ is the `mostly minus' Minkowski metric:
\begin{equation}
\label{etadef}
\eta=\text{diag}(1,-1,-1).
\end{equation}

We parametrise an $SU(1,1)$ matrix $h$ using  complex coordinates $(z_1,z_2)\in \C^{1,1}$ as
\begin{equation}
h=\begin{pmatrix}
z_1& \bar{z}_{2}\\
z_2&\bar{z}_1
\end{pmatrix},
 \qquad |z_1|^{2}-|z_2|^{2}=1,
\label{complex coordinate matrix}
\end{equation}
so that we can view $SU(1,1)$ as a submanifold of $\C^{1,1}$ :
\begin{equation}
SU(1,1)=\{(z_1,z_2)\in \C^{1,1}||z_1|^{2}-|z_2|^{2}=1\}. \label{su(1,1) complex def}
\end{equation}
It is the double cover of $\AdS$, the real submanifold of $\C^{1,1}$ defined by
\begin{equation}
\AdS=\{(z_1,z_2)\in \C^{1,1}| |z_1|^{2}-|z_2|^{2}=\ell^{2}\}/\Z_{2} , \label{Ads cdef}
\end{equation}
with $\ell$ called the AdS length. The $\Z_{2}$ quotient identifies $(-z_1,-z_2)$ and $(z_1,z_2)$.

Left invariant one-forms on $SU(1,1)$ are defined via
\begin{equation}
h^{-1}dh=\lf{i}t_{i}=\lf{0}t_{0}+\lf{1}t_{1}+\lf{2}t_{2}.
\end{equation}
They satisfy
\begin{equation}
d\lf{i}+\frac{1}{2}\varepsilon^{i}_{\;\;jk}\lf{j}\wedge\lf{k}=0. \label{frame structure equations}
\end{equation}
We make frequent use of  the complex combinations
\begin{equation}
\sigma=\lf{1}+i\lf{2}, \quad \bar{\sigma}=\lf{1}-i\lf{2}, \label{complex structure eq}
\end{equation}
for which we have that
\begin{equation}
d\sigma=-i\sigma\wedge\lf{0}, \quad d\lf{0}=\frac{i}{2}\bar{\sigma}\wedge \sigma.
\end{equation}
In terms of the complex coordinates we find that
\begin{equation}
\label{sigmaz}
\sigma=\lf{1}+i\lf{2}=2i(z_1dz_2-z_2dz_1),\quad \lf{0}=2i(\bar{z}_{2}dz_2-\bar{z}_{1}dz_1).
\end{equation}

The dual left-invariant vector fields $X_i$, $i=0,1,2$, generate the right action $h\to ht_{i}$ and satisfy
\begin{equation}
[X_{i},X_{j}]=\varepsilon_{ij}^{\;\;\;k}X_{k},
\end{equation}
so that the complex linear combinations, $X_{\pm}=X_{1}\pm iX_{2}$, satisfy
\begin{equation}
[X_{+},X_{-}]=-2iX_{0}, \quad [X_{0},X_{\pm}]=\pm iX_{\pm}.
\end{equation}
In terms of the complex coordinates they  are
\bee
X_{0}=\frac{i}{2}(z_2\partial_{2}+z_1\partial_{1}-\bar{z}_{2}\bar{\partial}_{2}-\bar{z}_{1}\bar{\partial}_{1}),\quad X_{-}=\bar{X}_{+}=X_{1}-iX_{2}=-i(\bar{z}_{1}\partial_{2}+\bar{z}_{2}\partial_{1}).
\eee
The only non-zero pairings are
\begin{equation}
\lf{0}(X_{0})=1, \quad \sigma(X_{-})=\bar{\sigma}(X_{+})=2. \label{pairings}
\end{equation} 

The Poincar\'e disk model of hyperbolic  two-space  can also be viewed as the coset space
\begin{equation}
H^{2}\simeq SU(1,1)/U(1),
\end{equation}
where we consider the $U(1)$ generated by $t_{0}$. As $H^{2}$ is contractible this means that $SU(1,1)$ is a trivial circle bundle over $H^{2}$, with $X_{0}$ generating translation in the fibre direction. 

We can construct a projection from the group manifold $SU(1,1)$ to $H^{2}$, in an analogous manner to the projection in the Hopf fibration \cite{RS}. Using the complex coordinate $z\in \C$ on $H^{2}$ and in terms of the complex coordinates $(z_1,z_2)$ for $SU(1,1)$ this projection is
\begin{equation}
\lh:SU(1,1)\to H^{2},\quad h\mapsto z=\frac{z_2}{z_1}. \label{ads-projection}
\end{equation}
A global section of this bundle is given by
\begin{equation}
s:H^{2}\to SU(1,1), \quad z\mapsto \frac{1}{\sqrt{1-|z|^{2}}}\begin{pmatrix}
1&\bar{z}\\
z&1
\end{pmatrix}. \label{ads-section}
\end{equation}

\subsection{Hyperbolic vortices as Cartan connections}
We now show that the hyperbolic vortex equations can be interpreted as the flatness conditions for a $su(1,1)$ Cartan connection which encodes the geometry, modified by the vortices.

\begin{prop}
\label{Cartanhyp}
The frame \eqref{complex frame} and the spin connection \eqref{spinconnection} for $H^{2}$ can be combined into the $su(1,1)$ gauge potential 
\begin{equation}
\label{hatA}
\hat{A}=\Gamma t_{0}+\frac{1}{2i}\left(et_{-}-\bar{e}t_{+}\right).
\end{equation}
The flatness condition for $\hat{A}$ is equivalent to the structure equation \eqref{structureeq} and Gauss equation \eqref{Gauss} on $H^{2}$. The flatness of the pull-back potential $f^{*}\hat{A}$, for holomorphic $f:H^{2}\to H^{2}$, is equivalent to the hyperbolic vortex equations being satisfied by the pair $(\phi,a)$ defined through \eqref{vortexpair}.
\end{prop}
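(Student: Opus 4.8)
The plan is to verify all three assertions by a single direct computation of the curvature two-form $F_{\hat A}=d\hat A+\hat A\wedge\hat A$, decomposed in the basis $\{t_0,t_+,t_-\}$ and then specialised. The only algebraic inputs are the brackets $[t_+,t_-]=-2it_0$, $[t_0,t_\pm]=\pm i t_\pm$, together with the identity $\hat A\wedge\hat A=\tfrac12(A^i\wedge A^j)[t_i,t_j]$ valid for a Lie-algebra-valued one-form $\hat A=A^i t_i$. I would first record, once and for all, the curvature of a generic one-form of the shape of \eqref{hatA}, namely $A=A_0 t_0+\tfrac{1}{2i}(E t_--\bar E t_+)$ with $A_0$ a real one-form and $E$ a complex one-form. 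Carrying out the bracket bookkeeping gives
\begin{equation}
F_A=\Bigl(dA_0+\tfrac{i}{2}E\wedge\bar E\Bigr)t_0+\tfrac{1}{2i}\bigl(dE-iA_0\wedge E\bigr)t_--\tfrac{1}{2i}\bigl(d\bar E+iA_0\wedge\bar E\bigr)t_+,
\end{equation}
the $t_+$ component being the complex conjugate of the $t_-$ component. The one step needing care here is keeping the factors of $i$ straight and recognising $\tfrac{i}{2}E\wedge\bar E$ as (a multiple of) the Kähler form \eqref{kahler form}.

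For the first claim I set $A_0=\Gamma$ and $E=e$. The $t_-$ component is then proportional to $de-i\Gamma\wedge e$, so its vanishing is exactly the structure equation \eqref{structureeq} (and the $t_+$ component is its conjugate). The $t_0$ component is $d\Gamma+\tfrac{i}{2}e\wedge\bar e=\mathcal R+\omega$, whose vanishing is $\mathcal R=-\omega$, i.e. the Gauss equation \eqref{Gauss} with $K=-1$. Hence $F_{\hat A}=0$ is equivalent to the structure and Gauss equations on $H^2$.

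For the second claim I pull back along holomorphic $f$. By definition \eqref{vortexpair} the pulled-back data are $f^*e=\phi e$ and $f^*\Gamma=\Gamma+a$, so $f^*\hat A$ has exactly the generic shape above with $A_0=\Gamma+a$ and $E=\phi e$. Applying the curvature formula and then using the base structure equation \eqref{structureeq} to cancel the resulting $\phi(de-i\Gamma\wedge e)$ term, the $t_-$ component collapses to a multiple of $(d\phi-ia\phi)\wedge e$, whose vanishing is the covariant holomorphicity condition, i.e. the first equation in \eqref{hyperbolic vortex}. Using the Gauss equation $\mathcal R=-\omega$ in the $t_0$ component turns $\mathcal R+F_a+|\phi|^2\omega$ into $F_a-(|\phi|^2-1)\mathcal R$, whose vanishing is the second equation in \eqref{hyperbolic vortex}. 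This yields the stated equivalence.

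The conceptual subtlety—and the point I would be most careful to phrase correctly—is that $f^*\hat A$ is in any case flat, being the pullback of the flat connection $\hat A$, so by naturality $F_{f^*\hat A}=f^*F_{\hat A}=0$ automatically. The content of the claim is therefore not that flatness constrains $f$, but that when $F_{f^*\hat A}$ is written out in terms of $(\phi,a,e,\Gamma)$ its three Lie-algebra components reproduce precisely the two vortex equations (and the conjugate), modulo the structure and Gauss equations that already hold on the base. Read in the opposite direction, for an arbitrary pair $(\phi,a)$ built on a fixed $H^2$ frame the vanishing of these components is a genuine reformulation of \eqref{hyperbolic vortex}; specialising to $(\phi,a)$ from \eqref{vortexpair} then merely reflects the fact, already established around \eqref{phif}, that such pairs automatically solve the vortex equations.
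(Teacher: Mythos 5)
Your proposal is correct and follows essentially the same route as the paper: a direct computation of $F_{\hat A}$ decomposed in the $t_0,t_\pm$ basis, specialised first to $(\Gamma,e)$ and then to $(\Gamma+a,\phi e)$, with the $t_0$ component giving the Gauss/second vortex equation and the $t_\mp$ components giving the structure equation/covariant holomorphicity. Your closing remark that $f^*\hat A$ is automatically flat by naturality, so the real content is the identification of the curvature components with the vortex equations, is a fair and slightly sharper reading than the paper's own phrasing, but it does not change the argument.
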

In other words, the gauge potential $\hat{A}$ defines  a Cartan connection describing the hyperbolic geometry of the Poincar\'e disk model of two-dimensional hyperbolic space while $f^{*}\hat{A}$ is the gauge potential for  a Cartan connection describing the deformed geometry defined by the hyperbolic vortex $(\phi,a)$.
\begin{proof}
The curvature of $\hat{A}$ is
\bee
F_{\hat{A}}	=d\hat{A}+\frac{1}{2}[\hat{A},\hat{A}]
				=\left(\mathcal{R}+\frac{i}{2}e\wedge \bar{e}\right)t_{0}+\frac{1}{2i}\left(de-i\Gamma\wedge e \right)t_{-}-\frac{1}{2i}\left(d\bar{e}+i\Gamma\wedge\bar{e}\right)t_{+}.
\eee
The coefficient of $t_{0}$ being zero is equivalent to the Gauss equation \eqref{Gauss}, and the coefficients of $t_{\pm}$ being zero is equivalent to the structure equation \eqref{structureeq}. For $f^{*}\hat{A}$ we can use \eqref{vortexpair} to see that
\begin{equation}
f^{*}\hat{A}=(a+\Gamma)t_{0}+\frac{1}{2i}\left(\phi et_{-}-\bar{\phi}\bar{e}t_{+}\right), \label{pulledback connection}
\end{equation}
which has curvature
\begin{equation}
f^{*}F_{\hat{A}}=\left(da-(|\phi|^{2} -1)\mathcal{R}\right)t_{0}+\frac{1}{2i}\left(d\phi -ia\phi\right)\wedge et_{-}-\frac{1}{2i}\left(d\bar{\phi}+ia\bar{\phi}\right)\wedge\bar{e}t_{+}.
\end{equation}
The hyperbolic vortex equations \eqref{hyperbolic vortex} being satisfied is thus equivalent to the vanishing of this curvature.
\end{proof}

\section{Vortices on \texorpdfstring{$SU(1,1)$}{SU(1,1)} }

\subsection{Vortex equations and flat \texorpdfstring{$SU(1,1)$}{SU(1,1)} connections }
We now define and solve vortex equations on the group manifold of $SU(1,1)$ and show that  three-dimensional vortex configurations which solve them  are  equivariant  versions  of  hyperbolic vortices. 

To prepare for the equivariant description, we define the space of equivariant functions on $SU(1,1)$ as
\begin{equation}
C^{\infty}(SU(1,1),\C)_{N}=\{F:SU(1,1)\to \C| 2iX_{0}F=-NF\}, \quad N\in \Z, \label{equivariant degree def}
\end{equation}
with $N$ called the equivariant degree of the function.
One finds that
\begin{equation}
i((1-|z|^{2})\bar{\partial}-\frac{N}{2}z)(s^{*}F)(z)=s^{*}(X_{+}F) , \label{left-invariant to holomorphic}
\end{equation}
where $F\in C^{\infty}(SU(1,1),\C)_{N}$.
From this we deduce the following commutative diagram
\begin{equation}
\xymatrix{C^{\infty}(SU(1,1),\C)_{N}\ar[r]^{X_{+}}\ar[d]_{s^{*}}& C^{\infty}(SU(1,1),\C)_{N+2} \ar@{->}[d]^{s^{*}}\\  C^{\infty}(H^{2})\ar@{->}[r]_{i((1-|z|^{2})\bar{\partial}-\frac{N}{2}z)}& C^{\infty}(H^{2}).} \label{X plus holomorphicity}
\end{equation}
We will see later that functions with a well defined equivariant degree on $SU(1,1)$ can be used to construct the lift of a vortex of finite charge from $H^{2}$; it is these lifts of finite charge vortices that are the analogues of the vortex configurations considered in \cite{RS}.

\begin{dfn}
A  pair $(\Phi,A)$ of a one-form  $A$ on $SU(1,1)$ and  a map $\Phi:SU(1,1)\to \C$ is called  a vortex configuration
on $SU(1,1)$  if it  satisfies the 
vortex equations
\begin{equation}
 \left( d\Phi-iA\Phi\right)\wedge\sigma=0 ,  \qquad F_{A}=-\frac{i}{2}\left(|\Phi|^{2}-1\right)\sigma\wedge\bar{\sigma},
\label{ads vortex equations}
\end{equation}
where $F_{A}=dA$.
\end{dfn}

Note, that unlike in the Euclidean case considered in \cite{RS}, we have not imposed any normalisation condition on $A$ and have not fixed the equivariant degree of $\Phi$.  However, the vortex equations imply the following equivariance condition:
\begin{equation}
\mathcal{L}_{X_{0}}A=d (A(X_0)), \quad i\mathcal{L}_{X_{0}}\Phi=-A(X_{0})\Phi. \label{equivariance}
\end{equation}
The first follows from the Cartan formula $\mathcal{L}_{X_{0}}  = d\iota_{X_0} + \iota_{X_0} d$ and the form of $F_A=dA$ dictated by the second vortex equation. The  equivariance condition for $\Phi$  can be obtained  by contracting the first vortex equation with $(X_{0},X_{-})$.   We discuss the case of $\Phi$ having equivariant degree $2N-2$, the analogue of the spherical case \cite{RS},  later.

The vortex equations  \eqref{ads vortex equations} clearly resemble the hyperbolic vortex equations, \eqref{hyperbolic vortex}, with the complexified left-invariant one-forms, $\sigma$ and $ \bar{\sigma}$ replacing the complexified frame, $e$ and $\bar{e}$.  We will establish the precise relation between the two at the end of this section.

We now come to the central result of this section, which is a  three-dimensional analogue of 
the description of hyperbolic vortices in terms of a flat  $SU(1,1)$ connection given in \eqref{Cartanhyp}.  It is also the Lorentzian analogue of   Theorem 3.2 in \cite{RS}, but  differs from it in two important respects. In the Euclidean version, the relevant $U(1)$ bundle is the Hopf bundle, and associated line bundles are classified by an integer degree, but the total space $SU(2)$ is simply-connected. Here,  the $U(1)$ bundle is trivial, but the total space $SU(1,1)$ is not simply connected. The generator of the first fundamental form is the curve
\bee
\gamma=\{ e^{\varphi t_0} \in SU(1,1) | \varphi \in [0,4\pi)\},
\eee
which enters our condition for vortex configuration to be globally solvable. 
 
  \begin{thm}
 \label{vortex connection theorem}
 A vortex configuration on $SU(1,1)$ determines a gauge potential for a flat $SU(1,1)$ connection on $SU(1,1)$ 
through the following expression:
\begin{equation}
\mathcal{A}=(A+\lf{0})t_{0}+\frac{1}{2}\left(\Phi\sigma t_{-}+\bar{\Phi}\bar{\sigma}t_{+}\right). \label{A connection}
\end{equation}
Conversely, any flat $SU(1,1)$ connection $  \mathcal{A}$  on $SU(1,1)$ with 
\bee
\label{nonabcond}
\mathcal{A}(X_0)= p t_0, \quad\mathcal{A} (X_-) =\alpha t_0 + \Phi t_-,
\eee
for functions $p:SU(1,1)\rightarrow \R$ and $\alpha, \Phi: SU(1,1)\rightarrow \C$,  determines a vortex configuration  $(\Phi,A)$ via the expansion  \eqref{A connection}.  \\
 A gauge potential $\mathcal{A}$ for a flat $su(1,1)$ connection on $SU(1,1)$ of the form \eqref{A connection} which satisfies 
\bee
\label{intcond}
\int_\gamma \mathcal{A} =4\pi n t_0,
\eee
for some $n\in\Z$, 
can be trivialised as $\mathcal{A}=V^{-1}dV$, where $V:SU(1,1)\to SU(1,1)$ is a bundle map covering a holomorphic map $f:H^{2}\to H^{2}$.  Without loss of generality we can write  $V$ in the form
\begin{equation}
V:(z_1,z_2)\mapsto\frac{1}{\sqrt{|F_1|^{2}-|F_2|^{2}}}\begin{pmatrix}
F_1&\bar{F}_{2}\\
F_2& \bar{F}_{1}
\end{pmatrix}, \label{ads bundle map}
\end{equation}
where $F_1$ and $F_2$ are maps $SU(1,1) \rightarrow \C$,  with  $|F_1|>|F_2|$ and in particular  $F_1 \neq 0$. 
The vortex configuration $(\Phi, A)$ can be computed from the bundle map $V$ through
\begin{equation}
V^{*}\sigma=\Phi\sigma, \qquad A=V^{*}\lf{0}-\lf{0}. \label{vortex field defs}
\end{equation}
\end{thm}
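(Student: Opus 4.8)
The plan is to treat the theorem in three stages: the forward direction (vortex configuration $\Rightarrow$ flat connection), the algebraic converse, and finally the global trivialisation under the period hypothesis \eqref{intcond}, which is where the real work lies. For the forward direction I would simply compute $F_{\mathcal{A}}=d\mathcal{A}+\frac{1}{2}[\mathcal{A},\mathcal{A}]$ for the Ansatz \eqref{A connection}, exactly as in the proof of Proposition \ref{Cartanhyp}. Writing $\mathcal{A}=(A+\sigma^{0})t_{0}+\frac{1}{2}\Phi\sigma\,t_{-}+\frac{1}{2}\bar{\Phi}\bar{\sigma}\,t_{+}$ and using the commutators $[t_{+},t_{-}]=-2it_{0}$, $[t_{0},t_{\pm}]=\pm it_{\pm}$ together with the Maurer--Cartan relations $d\sigma=-i\sigma\wedge\sigma^{0}$ and $d\sigma^{0}=\frac{i}{2}\bar{\sigma}\wedge\sigma$, the three components of the curvature collapse: the cross terms containing $\sigma\wedge\sigma^{0}$ cancel against $d\sigma$, and the $t_{0}$ component reduces to $F_{A}+\frac{i}{2}(1-|\Phi|^{2})\bar{\sigma}\wedge\sigma$ while the $t_{-}$ component reduces to $\frac{1}{2}(d\Phi-iA\Phi)\wedge\sigma$ (the $t_{+}$ component being its conjugate). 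Hence $F_{\mathcal{A}}=0$ is equivalent to the two vortex equations \eqref{ads vortex equations}, which establishes the forward implication and simultaneously the equivalence needed for the converse.

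For the converse I would exploit that any $su(1,1)$-valued one-form on $SU(1,1)$ is determined by its values on the global left-invariant coframe through the pairings \eqref{pairings}. Expanding $\mathcal{A}=\omega^{0}t_{0}+\omega^{+}t_{+}+\omega^{-}t_{-}$, the requirement that $\mathcal{A}$ be valued in $su(1,1)$ forces (since $\sigma^{0}$ is real and $\bar{\sigma}=\overline{\sigma}$) that $\omega^{0}$ be a real one-form and $\omega^{-}=\overline{\omega^{+}}$. The hypothesis $\mathcal{A}(X_{0})=p\,t_{0}$ gives $\omega^{\pm}(X_{0})=0$, and $\mathcal{A}(X_{-})=\alpha t_{0}+\Phi t_{-}$ gives $\omega^{+}(X_{-})=0$; conjugating and using $\overline{X_{-}}=X_{+}$ yields $\omega^{-}(X_{+})=0$ automatically. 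Decomposing $\omega^{+}$ in the coframe and imposing these vanishings leaves $\omega^{+}\propto\bar{\sigma}$, hence $\omega^{-}\propto\sigma$ with conjugate coefficient, and $\omega^{0}=A+\sigma^{0}$ with $A:=\omega^{0}-\sigma^{0}$ real. This is precisely the shape \eqref{A connection}, and flatness then delivers the vortex equations by the computation above.

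The substantive step is the global trivialisation, and the main obstacle is exactly the non-trivial $\pi_{1}(SU(1,1))\cong\Z$ flagged before the statement. Flatness gives $\mathcal{A}=V^{-1}dV$ locally, and the obstruction to a single-valued global $V$ is the holonomy representation $\pi_{1}(SU(1,1))\to SU(1,1)$. The fibre circle $\gamma=\{e^{\varphi t_{0}}\}$ generates $\pi_{1}$ and has tangent $X_{0}$, along which $\mathcal{A}(X_{0})=p\,t_{0}$ takes values in the \emph{abelian} subalgebra $\R t_{0}$; therefore the path-ordered exponential degenerates to an ordinary exponential and the holonomy around $\gamma$ equals $\exp\!\big(-\!\int_{\gamma}\mathcal{A}\big)=\exp(-4\pi n\,t_{0})=\mathrm{diag}(e^{-2\pi i n},e^{2\pi i n})=I$ precisely under \eqref{intcond}. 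As $[\gamma]$ generates $\pi_{1}$ and holonomy is a homomorphism, the whole representation is trivial, so a global $V:SU(1,1)\to SU(1,1)$ with $\mathcal{A}=V^{-1}dV$ exists. Being $SU(1,1)$-valued it automatically has the shape \eqref{complex coordinate matrix}, i.e.\ the form \eqref{ads bundle map}, with the determinant constraint forcing $|F_{1}|>|F_{2}|\ge 0$ and hence $F_{1}\neq 0$; the common rescaling freedom $F_{i}\mapsto\lambda F_{i}$ is the stated ``without loss of generality''.

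It remains to identify $V$ as a holomorphic bundle map and to read off $(\Phi,A)$. Since $\mathcal{A}=V^{*}(h^{-1}dh)$ and the Maurer--Cartan form expands as $h^{-1}dh=\sigma^{0}t_{0}+\frac{1}{2}\sigma t_{-}+\frac{1}{2}\bar{\sigma}t_{+}$, comparing the $t_{0},t_{\pm}$ components of $\mathcal{A}=V^{*}(h^{-1}dh)$ with \eqref{A connection} gives immediately $V^{*}\sigma=\Phi\sigma$ and $A=V^{*}\sigma^{0}-\sigma^{0}$, which are \eqref{vortex field defs}. That $V$ covers a base map follows from $\mathcal{A}(X_{0})=V^{-1}(X_{0}V)\in\R t_{0}$: integrating $X_{0}V=V\,p\,t_{0}$ along a fibre shows $V(he^{\varphi t_{0}})=V(h)e^{\chi(\varphi)t_{0}}$, so $\pi\circ V$ depends only on $\pi(h)$ and defines $f=F_{2}/F_{1}:H^{2}\to H^{2}$, with $|F_{1}|>|F_{2}|$ ensuring $|f|<1$. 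Finally, holomorphicity of $f$ comes from the vanishing $t_{+}$-component of $\mathcal{A}(X_{-})$: a short computation shows the upper-right entry of $V^{-1}X_{-}V$ is proportional to $X_{-}\bar{f}$, so the condition reads $X_{-}\bar{f}=0$, equivalently $\overline{X_{+}f}=0$; by the $N=0$ case of \eqref{left-invariant to holomorphic} this is exactly $\partial_{\bar{z}}f=0$, i.e.\ $f$ holomorphic. The only genuinely delicate point throughout is the holonomy argument of the third paragraph, where the abelian nature of $\mathcal{A}$ along the $\pi_{1}$-generator is what makes the integral condition \eqref{intcond}, rather than a full non-abelian path-ordered condition, sufficient for global trivialisation.
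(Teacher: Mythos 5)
Your proposal is correct and follows essentially the same route as the paper's own proof: the curvature computation for the forward direction, the expansion of $\mathcal{A}$ in the left-invariant coframe for the converse, the reduction of the path-ordered exponential along $\gamma$ to an ordinary exponential via the abelian condition $\mathcal{A}(X_0)=p\,t_0$ combined with flatness for contractible loops, and the identification of $V$ as a bundle map covering a holomorphic map through the equivariant degree of $F_2/F_1$ and the correspondence between $X_+$ (equivalently its conjugate) and $\partial_{\bar z}$. You merely spell out some steps the paper delegates to its reference, so no further comparison is needed.
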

\begin{proof}
The proof that the vortex equations \eqref{ads vortex equations} imply flatness of $\mathcal{A}$ given in \eqref{A connection} is a simple calculation, see also \cite{RS}. Conversely, expanding an  $su(1,1)$-valued one-form $\mathcal{A}$ on $SU(1,1)$  in terms of the generators $t_0,t_+$ and $t_-$, with coefficients which are linear combinations of the one-forms $\sigma^0,\sigma$ and $ \bar{\sigma}$,  and imposing \eqref{nonabcond} leads to a gauge potential $\mathcal{A}$  of the form \eqref{A connection}  with Higgs field $\Phi$ and abelian gauge field
\bee 
A = (p-1)\sigma^0  + \frac{1}{2}\left(\alpha \sigma + \bar{\alpha} \bar{\sigma}\right).
\eee
 The flatness of  $\mathcal{A}$ then give the vortex equations \eqref{ads vortex equations}, as already noted.

A  connection $\mathcal{A}$ on $SU(1,1)$ can be trivialised in terms of  $V:SU(1,1)\to SU(1,1)$ as $\mathcal{A}=V^{-1}dV$ if its path-ordered exponential is path-independent. In that case one can construct $V$ explicitly from the path-ordered exponential of $\mathcal{A}$ along any path, starting at a fixed (but arbitrary) base point,  see for example  \cite{BaezMuniain}. 

In our case, the  flatness of  $\mathcal{A}$ ensures the path-independence of the path-ordered exponential for all contractible paths on $SU(1,1)$, by the non-abelian Stokes Theorem. The condition \eqref{nonabcond} implies  that the path-ordered exponential and the exponential of the  ordinary integral of $\mathcal{A}$  along $\gamma$   coincide,  and  finally \eqref{intcond} ensures   that the   path-ordered exponential of $\mathcal{A}$ along $\gamma$ is trivial: 
\bee
\mathcal{P} \exp(\int_\gamma \mathcal{A}) = \mathbb{I}.
\eee
Again using  flatness of $\mathcal{A}$ we conclude  that the path-ordered exponential along any closed curve on $SU(1,1)$  is the identity, thus establishing the path-independence of the path-ordered exponential.

It remains  to show that the requirements \eqref{nonabcond} for   $\mathcal{A} =V^{-1}dV$ force $V$ to be a bundle map covering a holomorphic map $H^2\rightarrow H^2$.  The  first  condition becomes
\begin{equation}
X_{0}V =pV t_{0}, \label{bundlemap property}
\end{equation}
for $p: SU(1,1) \rightarrow \R$. However, this  is  precisely the infinitesimal formulation of the requirement that $V$ preserves the fibres of the fibration $SU(1,1) \rightarrow H^2$, i.e., that $V$ is a  bundle map. 

The second condition in \eqref{nonabcond} complex conjugates to 
\bee
\label{secondcond}
V^{-1} X_+V=\bar{\alpha} t_0 + \bar{\Phi} t_+.
\eee
Applying  
\begin{equation}
V^{-1}dV=V^{*}\lf{0} t_{0}+\frac{1}{2}\left(V^{*}\sigma t_{-}+V^{*}\bar{\sigma}t_{+}\right)
\end{equation} 
to $X_+$,  the condition \eqref{secondcond} is thus  equivalent to the vanishing of the $t_-$-component in $V^{-1}dV$: 
\bee
\label{pullbacksigma}
V^*\sigma (X_+) =0.
\eee
We need to show that this is equivalent to $V$ covering a holomorphic map.

Using the parameterisation \eqref{ads bundle map} of $V$, we see from \eqref{bundlemap property}  that the components of $V$ satisfy
\begin{equation}
X_{0}\left(\frac{F_{i}}{\sqrt{|F_1|^{2}-|F_2|^{2}}}\right)=\frac{i}{2}p \left(\frac{F_{i}}{\sqrt{|F_1|^{2}-|F_2|^{2}}}\right). \label{bundle map components}
\end{equation}
It follows that the map  $F=\pi\circ V = F_2/F_1$ has equivariant degree zero, and that $V$ covers the map
\begin{equation}
f=s^{*}\left(\frac{F_2}{F_1}\right):H^{2}\to H^{2}.
\end{equation}
Applying \eqref{left-invariant to holomorphic}  to the map $F=F_2/F_1$  we deduce that $f$ being holomorphic  is equivalent to 
 \bee 
X_{+}\left(\frac{F_2}{F_1}\right)=0.
\eee
However, again recalling that $F_1\neq 0$ and using  the explicit form \eqref{sigmaz} of $\sigma$,  this is seen to be  equivalent to the condition \eqref{pullbacksigma} on $V$. Thus, the requirement \eqref{nonabcond} forces $V$ to be a bundle map covering a holomorphic map, as claimed. 
\end{proof}

The Theorem and its proof deserve a few comments. 
First we note that the vortex  equations  \eqref{ads vortex equations}  are  invariant under  $U(1)$ gauge transformations  of the form
\begin{equation}
(\Phi,A)\mapsto(e^{i\beta}\Phi, A+d\beta), \quad \beta\in C^{\infty}(SU(1,1)). \label{su(1,1) U(1) gauge transform}
\end{equation}
The $U(1)$ gauge invariance is implemented at the level of the bundle map $V$ via 
\begin{equation}
V\mapsto \tilde V = Ve^{\beta t_{0}},\quad \beta\in C^{\infty}(SU(1,1)).
\end{equation}
This new trivialisation defines the same $f$ as $V$ and leads to a connection $\tilde{V}^{-1}d\tilde{V}$ differing  from $\mathcal{A}=V^{-1}dV$ by the $U(1)$ gauge transformation given in \eqref{su(1,1) U(1) gauge transform}.

Secondly, we observe that we can build vortex configurations on $SU(1,1)$ from a given  holomorphic map $f:H^{2}\to H^{2}$ by choosing  
\bee
\label{triviallift}
F_1(z_1,z_2)=1,\qquad 
F_2(z_1,z_2)=f\left(\frac{z_2}{z_1}\right).
\eee 
For this choice of $V$ the connection satisfies $\mathcal{A}(X_{0})=0$ and also, by flatness,   $\mathcal{L}_{X_0} \mathcal{A}=0$, so that $\mathcal{A}$ is constant along the fibre.

Finally, the condition \eqref{intcond} is needed to ensure the existence of a globally defined trivialisation of $\mathcal{A}$. When it is violated, one can still trivialise, but one will in general have to work with local  trivialisations, defined in at least two  simply connected patches which cover $SU(1,1)$. We will not consider such trivialisations in this paper, but they may well be of interest. 

\subsection{Vortex configurations of finite equivariant degree}

In the following  we exhibit a  choice of  bundle map which is a natural lift of the Blaschke product considered in Sect.~\ref{hyphol}. The resulting vortices  on $SU(1,1)$ are  lifts of hyperbolic vortices with a finite number of zeros, and  a Lorentzian analogue of the vortices obtained from homogeneous polynomials in the Euclidean version considered in \cite{RS}. 

Before stating our result we define  functions 
 $F_1, F_2:SU(1,1) \rightarrow \C $    for given complex numbers $c_k$, $k=1,\ldots, N$, in the unit disk  via
\begin{equation}
F_1=\prod_{k=1}^{N}(z_1-\bar{c}_kz_2), \qquad F_2=\prod_{k=1}^{N}(z_2-c_kz_1), \label{blaschke form}
\end{equation}
and note that $F_2/F_1$ is a function of $z=z_2/z_1$ and given by the Blaschke product \eqref{fblaschke}. In particular,  therefore  $|F_1|>|F_2|$, and we can use $F_1,F_2$ to define a bundle map $V$ covering the Blaschke product \eqref{fblaschke} via \eqref{ads bundle map}.

\begin{cor}
\label{finite charge vortices on H^2}
Let $V:SU(1,1) \rightarrow SU(1,1)$ be the bundle map \eqref{ads bundle map} constructed from \eqref{blaschke form}. 
Then the vortex configuration $(\Phi,A)$ constructed from the flat connection $\mathcal{A} =V^{-1} dV$ via Theorem \ref{vortex connection theorem} has a Higgs field of equivariant degree $2N-2$ and a gauge field which  satisfies the normalisation condition 
\bee
A(X_{0})=N-1.
\eee
The  vortex configuration $(\Phi,A)$ can be given in terms of $F_1, F_2$ as
\begin{equation}
\Phi=\frac{F_1\partial_{2}F_2-F_2\partial_{2}F_1}{z_1(|F_1|^{2}-|F_2|^{2})}, \label{explicit Higgs field}
\end{equation}
and 
\begin{equation}
A=(N-1)\lf{0}-\frac{i}{2}X_{-}\ln D^{2} \sigma+\frac{i}{2}X_{+}\ln D^{2}\bar{\sigma}, \label{vortex gauge potential}
\end{equation}
where $D^{2}=|F_1|^{2}-|F_2|^{2}$.
\end{cor}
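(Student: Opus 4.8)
The plan is to compute the two vortex fields directly from the defining relations $V^{*}\sigma=\Phi\sigma$ and $A=V^{*}\lf{0}-\lf{0}$ supplied by Theorem~\ref{vortex connection theorem}, reading off the complex coordinates of the bundle map \eqref{ads bundle map} as $Z_1=F_1/\sqrt{D^{2}}$, $Z_2=F_2/\sqrt{D^{2}}$ (so that $|Z_1|^{2}-|Z_2|^{2}=1$ automatically) and substituting these into the coordinate formulae \eqref{sigmaz} for $\sigma$ and $\lf{0}$. At the outset I would record the two structural facts that drive every simplification: $F_1,F_2$ from \eqref{blaschke form} are \emph{holomorphic and homogeneous of degree $N$} in $(z_1,z_2)$, and $D^{2}=|F_1|^{2}-|F_2|^{2}$ is real with $X_0 D^{2}=0$ (an immediate consequence of Euler's theorem), so that the factor $\sqrt{D^{2}}$ carries equivariant degree zero.

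For the Higgs field I would substitute $Z_i=F_i/\sqrt{D^{2}}$ into $V^{*}\sigma=2i(Z_1\,dZ_2-Z_2\,dZ_1)$; the $d(D^{2})$ terms cancel in the antisymmetric combination, leaving $V^{*}\sigma=2i(F_1\,dF_2-F_2\,dF_1)/D^{2}$. Contracting with $X_-$ and using $\sigma(X_-)=2$ from \eqref{pairings} then gives $\Phi=\tfrac{i}{D^{2}}\bigl(F_1 X_-F_2-F_2 X_-F_1\bigr)$. This is where the main computational knot lies: the raw expression involves both $\partial_1$ and $\partial_2$ (recall $X_-=-i(\bar z_1\partial_2+\bar z_2\partial_1)$), whereas \eqref{explicit Higgs field} is written through $\partial_2$ alone with a single $z_1$ in the denominator. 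I would untie it by applying Euler's relation $z_1\partial_1F_i+z_2\partial_2F_i=NF_i$ to trade the $\partial_1$-Wronskian $F_1\partial_1F_2-F_2\partial_1F_1$ for $-(z_2/z_1)(F_1\partial_2F_2-F_2\partial_2F_1)$; the constraint $|z_1|^{2}-|z_2|^{2}=1$ then collapses the resulting coefficient to $1/z_1$ and reproduces \eqref{explicit Higgs field} exactly. I expect this homogeneity-plus-constraint reduction to be the step most vulnerable to bookkeeping slips, though it is purely algebraic. The equivariant degree follows by counting: the numerator $F_1\partial_2F_2-F_2\partial_2F_1$ is holomorphic homogeneous of degree $2N-1$, $z_1$ carries degree $1$ and $D^{2}$ degree $0$, so $2iX_0\Phi=-(2N-2)\Phi$.

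For the normalisation I would invoke the bundle-map identity $X_0 V=pV t_0$ from \eqref{bundlemap property}: homogeneity gives $X_0 Z_1=\tfrac{i}{2}NZ_1$, which forces $p=N$, and since $\lf{0}(X_0)=1$ this yields $A(X_0)=(V^{*}\lf{0})(X_0)-1=p-1=N-1$.

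Finally, to establish \eqref{vortex gauge potential} I would expand $A=V^{*}\lf{0}-\lf{0}$ in the coframe dual to $\{X_0,X_+,X_-\}$ using the pairings \eqref{pairings}; the $\lf{0}$-coefficient is the $A(X_0)=N-1$ just found. Substituting $Z_i=F_i/\sqrt{D^{2}}$ into $V^{*}\lf{0}=2i(\bar Z_2\,dZ_2-\bar Z_1\,dZ_1)$, the $d(D^{2})$ terms now survive and assemble into $i\,d\ln D^{2}$, leaving a residual $2i(\bar F_2\,dF_2-\bar F_1\,dF_1)/D^{2}$. Contracting with $X_-$ and using the crucial fact that $X_-$ annihilates the antiholomorphic $\bar F_i$ gives $\bar F_2 X_-F_2-\bar F_1 X_-F_1=-X_-(D^{2})$, so the two pieces combine into $A(X_-)=-iX_-\ln D^{2}$, hence the $\sigma$-coefficient $-\tfrac{i}{2}X_-\ln D^{2}$. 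The conjugate computation with $X_+$, which kills the holomorphic $F_i$, supplies the $\bar\sigma$-coefficient $\tfrac{i}{2}X_+\ln D^{2}$, completing \eqref{vortex gauge potential}.
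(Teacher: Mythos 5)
Your proposal is correct and follows essentially the same route as the paper: the paper likewise obtains $A(X_0)=N-1$ from $X_0V=NVt_0$, reads off $\Phi=\tfrac12 V^{*}\sigma(X_-)$, and computes $A$ from the contractions $V^{*}\lf{0}(X_\pm)=\pm iX_\pm\ln D^2$; you simply carry out these contractions explicitly (including the Euler-relation reduction to the $\partial_2$-Wronskian) and obtain the equivariant degree by direct homogeneity counting rather than by citing the general equivariance condition \eqref{equivariance}. All the intermediate identities you state check out.
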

\begin{proof}
It follows from the explicit form of $F_1,F_2$ that
\bee
X_0V=NVt_0,
\eee
so that $\mathcal{A}(X_0) = Nt_0$ and therefore, by the decomposition \eqref{A connection},   $A(X_{0})=N-1$. The general equivariance condition \eqref{equivariance} then implies 
\bee
i\mathcal{L}_{X_0} \Phi = -(N-1)\Phi,
\eee
so that $\Phi$ has equivariant degree $2N-2$ by \eqref{equivariant degree def}. To get the explicit expression for $\Phi$  we compute
\begin{equation}
\Phi=\frac{1}{2}V^{*}\sigma(X_{-}).
\end{equation}
 To compute $A$ we use 
\begin{equation}
V^{*}\lf{0}(X_{0})=N, \quad V^{*}\lf{0}(X_{+})=iX_{+}\ln D^{2}, \quad V^{*}\lf{0}(X_{-})=-iX_{-}\ln D^{2}, 
\end{equation}
to get the claimed result.
\end{proof}

Recall that the Blaschke product \eqref{fblaschke} gives rise, via the pull-back construction,  to a hyperbolic vortex of charge $N-1$ on $H^2$. The Corollary above shows that the natural covering $V$ of the Blaschke product defines a vortex configuration on $SU(1,1)$ of  finite equivariant degree $2N-2$.

For finite Blaschke products, the lift \eqref{blaschke form} defines a natural bundle map which we  used to construct a three-dimensional vortex configuration with non-zero degree. In general,
lifting to a configuration with non-zero equivariant degree is not trivial. For example,  the  vortices on the hyperbolic cylinder  studied   in \cite{MR,MM} require the infinite Blaschke product  
\begin{equation}
f(z)=z\prod_{j=1}^{\infty}\left(\frac{z-a_{j}^{2}}{1-a_{j}^{2}z}\right)^{2},
\end{equation}  
where the zeros of $f$ are at $a_{j}=i\tanh\left(\frac{j\lambda}{2}\right)$ with $\lambda$ defined in \cite{MR,MM} as $\lambda=\frac{\pi K^{\prime}(k)}{K(k)}$ for $K$ the elliptic integral of the first kind,   $K^{\prime}(k)=K(\sqrt{1-k^2})$ and any $0<k<1$.
These can still be lifted to $SU(1,1)$ via the  lift \eqref{triviallift}, but there does not appear to be any non-trivial natural option. 

\subsection{Lifting Cartan connections  for  hyperbolic vortices}
We have already seen how to lift  hyperbolic vortices to  vortex configurations on $SU(1,1)$. Since the latter can be expressed in terms of a flat $SU(1,1)$ connection, it is natural to expect a link with the Cartan connection encoding hyperbolic vortices according to Proposition \ref{Cartanhyp}.  In this short section, we exhibit this link.

For a nowhere-vanishing  function $g: H^2 \to \C$ define the map
\begin{equation}
r_g: H^2 \to SU(1,1) , \quad r_g=\begin{pmatrix}
\frac{\bar{g}}{|g|}&0\\
0&\frac{g}{|g|}
\end{pmatrix}. \label{rmap}
\end{equation}
We use  this map as a gauge transform in the following Lemma, which is a Lorentzian analogue of  Lemma 4.3  in \cite{RS}.
\begin{lem} \label{lifted connection lemma}
With the section $s:H^2\rightarrow SU(1,1)$ defined as in \eqref{ads-section}, the gauge potential \eqref{hatA} for the Cartan connection of the hyperbolic disk is trivialised by $s$:
\begin{equation}
\hat{A}=s^{-1}ds. \label{H^2 cartan connection}
\end{equation}
If $V$ is a bundle map of the form \eqref{ads bundle map} covering  a holomorphic map $f:H^{2}\to H^{2}$,  then the gauge potential $f^{*}\hat{A}$ for the deformed Cartan geometry and the pull-back via $s$ of $\mathcal{A}=V^{-1}dV$ are related through the gauge transformation $r_{f_{1}}$, where $f_1=F_1\circ s$:
\begin{equation}
f^{*}\hat{A}=r^{-1}_{f_{1}}s^{*}(\mathcal{A})r_{f_{1}}+r^{-1}_{f_{1}}dr_{f_{1}}. \label{hyperbolic vortex vortex config relation}
\end{equation} 
\end{lem}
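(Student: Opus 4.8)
The plan is to prove the two assertions in turn, with the trivialisation \eqref{H^2 cartan connection} supplying the key input for the gauge-equivalence \eqref{hyperbolic vortex vortex config relation}. For the first part I would simply compute the Maurer--Cartan form of the coset representative $s$ directly. Writing $s$ as in \eqref{ads-section}, I form $ds$ and $s^{-1}$ as explicit $2\times2$ matrices, multiply, and expand $s^{-1}ds$ in the basis $\{t_0,t_+,t_-\}$. Matching the $t_0$-coefficient against $\Gamma$ from \eqref{spinconnection} and the $t_\pm$-coefficients against $e,\bar e$ from \eqref{complex frame} reproduces \eqref{hatA}. This is routine; the point is only that $s$ is the standard coset representative, so its Maurer--Cartan form is guaranteed to return the canonical frame and spin connection of $H^2$.

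For the second part I would reduce everything to one algebraic identity, using Part 1 to avoid any further computation with differential forms. Two observations do the work. First, by \eqref{H^2 cartan connection} and functoriality of pullback, $\hat A = s^*\theta$ for the Maurer--Cartan form $\theta$ on $SU(1,1)$, so $f^*\hat A = f^*s^*\theta = (s\circ f)^*\theta = (s\circ f)^{-1}d(s\circ f)$. Second, by definition $s^*\mathcal A = s^*(V^{-1}dV) = (V\circ s)^{-1}d(V\circ s)$. Setting $w = s\circ f$ and $u = V\circ s$, both sides of \eqref{hyperbolic vortex vortex config relation} are Maurer--Cartan forms, and the standard gauge law $(ug)^{-1}d(ug) = g^{-1}(u^{-1}du)g + g^{-1}dg$ shows that \eqref{hyperbolic vortex vortex config relation} is equivalent to the pointwise matrix identity
\begin{equation}
s\circ f = (V\circ s)\, r_{f_1}.
\end{equation}

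To establish this I would evaluate both sides explicitly. With $f_i = F_i\circ s$, so that $f = f_2/f_1$ and $f_1\neq 0$, the right-hand side is
\begin{equation}
\frac{1}{\sqrt{|f_1|^2-|f_2|^2}}\begin{pmatrix} f_1 & \bar f_2 \\ f_2 & \bar f_1 \end{pmatrix}\begin{pmatrix} \bar f_1/|f_1| & 0 \\ 0 & f_1/|f_1| \end{pmatrix} = \frac{|f_1|}{\sqrt{|f_1|^2-|f_2|^2}}\begin{pmatrix} 1 & \bar f \\ f & 1 \end{pmatrix},
\end{equation}
using $f_2\bar f_1/|f_1|^2 = f$ and $\bar f_2 f_1/|f_1|^2 = \bar f$. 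Since $|f_1|/\sqrt{|f_1|^2-|f_2|^2} = 1/\sqrt{1-|f|^2}$, this equals $s(f)$ by \eqref{ads-section}, which is the identity required. Conceptually, $u = V\circ s$ and $w = s\circ f$ both project to $f$ under $\lh$ and so lie in the same fibre; the gauge transformation $r_{f_1}$ is exactly the fibre element that corrects the lift $V\circ s$ to the canonical section value $s\circ f$.

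The main obstacle here is not computational but structural: recognising that the content of \eqref{hyperbolic vortex vortex config relation} collapses to the single identity $s\circ f = (V\circ s)\,r_{f_1}$, and identifying $r_{f_1}$ as the precise fibre phase relating the two lifts of $f$. Once Part 1 is in hand and this identification is made, the verification is just the short matrix product above, with no differentiation beyond that already performed for the trivialisation.
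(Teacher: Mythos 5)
Your proposal is correct and follows essentially the same route as the paper: a direct computation of $s^{-1}ds$ for the first claim, and for the second the key identity $s\circ f=(V\circ s)\,r_{f_{1}}$ combined with $\mathcal{A}=V^{-1}dV$ and the standard gauge-transformation law for Maurer--Cartan forms. In fact you supply more detail than the paper's own proof, which merely states the identity $s\circ f=(V\circ s)\,r_{f_{1}}$ without the explicit matrix verification you carry out.
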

Note that,  if we work in the gauge where $F_1=1$ and $F_2=f\left(\frac{z_2}{z_1}\right)$, then $r_{f_{1}}=\mathbb{I}$ , so $f^{*}\hat{A}$ and $s^{*}\left(V^{-1}dV\right)$ agree. More generally, the fact that $F_1$ and therefore $f_{1}$ has no zeros   means that the   gauge transformation $r_{f_{1}}$ is  smooth. This is in contrast to the Euclidean case considered in \cite{RS} where a singular gauge transformation was needed.
\begin{proof}
The proof is a straightforward  calculation which proceeds along the lines given in \cite{RS}. To show \eqref{H^2 cartan connection} one uses  \eqref{ads-section} and compares to the definitions of $e$ and $\Gamma$ in terms of the complex coordinates in \eqref{complex frame} and \eqref{spinconnection}. To show \eqref{hyperbolic vortex vortex config relation}, one notes  $ 
s\circ f =(V\circ s)r_{f_{1}} $ 
and $\mathcal{A}= V^{-1} d V$. 
\end{proof}

\section{Magnetic Dirac operators on \texorpdfstring{$\AAdS$}{AdS3} and  Minkowski space}

\subsection{Notational conventions}
We denote three-dimensional Minkowski space by $\R^{1,2}$,  and use a `mostly minus' Lorentzian metric $\eta$ with matrix \eqref{etadef} in an orthonormal basis. We write elements  of Minkowski space as $\vec{x}= (x^0,x^1,x^2)^t$ so that 
 \bee
 \eta= (dx^{0})^{2}-(dx^{1})^{2}-(dx^{2})^{2}.
 \eee
 Our volume element is  $dx^{0}\wedge dx^{1}\wedge dx^{2}$  so that  $dx^0,dx^1,dx^2$   is an oriented basis of the cotangent space.
We  use indices $i,j\ldots$  in the range $0,1,2$, raised  and lowered using $\eta_{ij}$. The scalar and vector product are given by
\bee
\vec{x}\cdot \vec{y}=x^{i}y_{i}, \qquad 
(\vec{x}\times \vec{y})^k=\varepsilon_{ij}^{\;\;\;k}x^{i}y^{j},
\eee
for $\vec{x},\vec{y}\in \R^{1,2}$, $\varepsilon_{012}=1$  and the summation convention being understood between pairs of raised and lowered indices. The Lorentzian length squared  of  $\vec{x}$ is denoted  by
\begin{equation}
r^{2}=x^{i}x_{i}=(x^{0})^{2}-(x^{1})^{2}-(x^{2})^{2}.
\end{equation} 
We also use the notation  $ \partial_i=\partial / \partial x^i $ for partial derivatives.

On $SU(1,1)$ we continue to  work with the notation introduced in Sect.~\ref{su11conventions}, and   use the following oriented orthonormal frame consisting of 
\begin{equation}
 \frac{1}{2}\lf{0},\frac{1}{2}\lf{1}, \frac{1}{2}\lf{2},
\end{equation}
the metric
\begin{equation}
ds^{2}=\frac{1}{4}\left((\lf{0})^{2}-(\lf{1})^{2}-(\lf{2})^{2}\right) \label{ads metric},
\end{equation}
and orientation
\begin{equation}
\text{Vol}_{\text{ \tiny AdS}}= \frac{1}{8}\lf{0}\wedge\lf{1}\wedge\lf{2}. \label{ads orientation}
\end{equation}

Differential forms provide the natural language for our discussion, but occasionally we use the isomorphisms between  forms and vector fields which are possible on a three-dimensional manifold with a non-degenerate inner product and volume form. The inner product allows one to identify vector fields with one-forms; denoting  the volume form $\text{Vol}$, it   establishes a bijection between a  vector field $X$  and  a two-form $F$ via 
\bee
\label{2formvfield}
\iota_X\text{Vol} = F.
\eee
On $SU(1,1)$, for example, the vector field $X_0$ generating the fibre translation is mapped to the two-form $\frac  1 8 \sigma^1\wedge \sigma^2$ via \eqref{ads orientation}, and this will play a role in our discussions. 

In the following we will be considering the Dirac equation on both $SU(1,1)$ and $\R^{1,2}$,  so we need to fix our conventions for the Clifford algebra $Cl(1,2)$. The algebra is generated by the gamma matrices, $\gamma_{i}$ which satisfy
\begin{equation}
\{\gamma_{i},\gamma_{j}\}=-2\eta_{ij}.
\end{equation}
We pick $\gamma_{i}=2t_{i}$ as our representation.

\subsection{Stereographic projection and frames}
We will be using a stereographic projection to relate  Dirac operators on  $\AAdS\simeq SU(1,1)$ to Dirac operators on Minkowski space. We now set up our conventions and explain how  orthonormal frames on these spaces are mapped into each other via stereographic projection. 

To discuss stereographic projection from  $\AAdS$ to $\R^{1,2}$,  it is helpful to think of $\AAdS$ as a real manifold. 
 As a subspace of $\R^{2,2}$,   and  with AdS length $\ell$, it is given by
\begin{equation}
\label{AdSinR4}
\AAdS=\{(y^{0},y^{1},y^{2},y^{3})\in \R^{2,2}| (y^{0})^{2}-(y^{1})^{2}-(y^{2})^{2}+(y^{3})^{2}=\ell^{2}\}.
\end{equation}
This is just the definition of $\AAdS$ as a submanifold of $\C^{1,1}$ from \eqref{Ads cdef} written in terms of real coordinates. The real coordinates are related to the complex coordinates for $SU(1,1)$ \eqref{complex coordinate matrix} through 
\begin{equation}
\ell(z_1,z_2)=(y^{3}+iy^{0},y^{2}-iy^{1}). \label{real complex coordinate relation}
\end{equation}

Just as stereographic projection from the sphere requires one to single out a north and a south pole, we need to pick special points on $\AAdS$ to define our stereographic projection. We  choose
\bee
\label{Pdef}
P_\pm= (0,0,0,\pm \ell)\in \AAdS.
\eee 
Then, to map a point  $(y^{0},y^{1},y^{2},y^{3})\in \AAdS$ into  $\R^{1,2}$,  we draw the line between it and the point $P_-$; the image is the  intersection  with $\R^{1,2}$  at $\{y^{3}=0\}$, see Fig.~\ref{ads proj figure}. This intersection   does not exist  for all points in  $\AAdS$ but only for the subset
\begin{equation}
\widetilde{\text{AdS}}_{+}=\{(y^{0},y^{1},y^{2},y^{3})\in \AAdS | y^{3}>-\ell\}. \label{stereo domain of definition}
\end{equation} 
Moreover, the point of intersection necessarily   lies in the subset $\Il\subset \R^{1,2}$ defined as 
\begin{equation}
\Il=\{(x^{0},x^{1},x^{2})\in \R^{3}|r^{2}=(x^{0})^{2}-(x^{1})^{2}-(x^{2})^{2}>-\ell^{2}\}. \label{ads stereo subset}
\end{equation}
Geometrically $\Il$ is the inside of a single sheeted hyperboloid. 

Thus we can define the  stereographic projection map
\begin{equation}
\text{St}:\widetilde{\text{AdS}}_{+}\to \R^{1,2}, \quad (y^{0},y^{1},y^{2},y^{3})\mapsto \vec{x}=\left(\frac{\ell y^{0}}{\ell+y^{3}},\frac{\ell y^{1}}{\ell+y^{3}},\frac{\ell y^{2}}{\ell+y^{3}}\right)
\end{equation}
and its inverse
\begin{equation}
\text{St}^{-1}:\Il \in \R^{1,2}\to \widetilde{\text{AdS}}_{3}, \quad \vec{x} \mapsto (y^{0},y^{1},y^{2},y^{3})=\frac{\ell}{\ell^{2}+r^{2}}\left(2\ell x^{0},2\ell x^{1}, 2\ell x^{2},\ell^{2}-r^{2}\right).
\end{equation}

\begin{figure}[!htbp]
\centering
\includegraphics[width=7truecm]{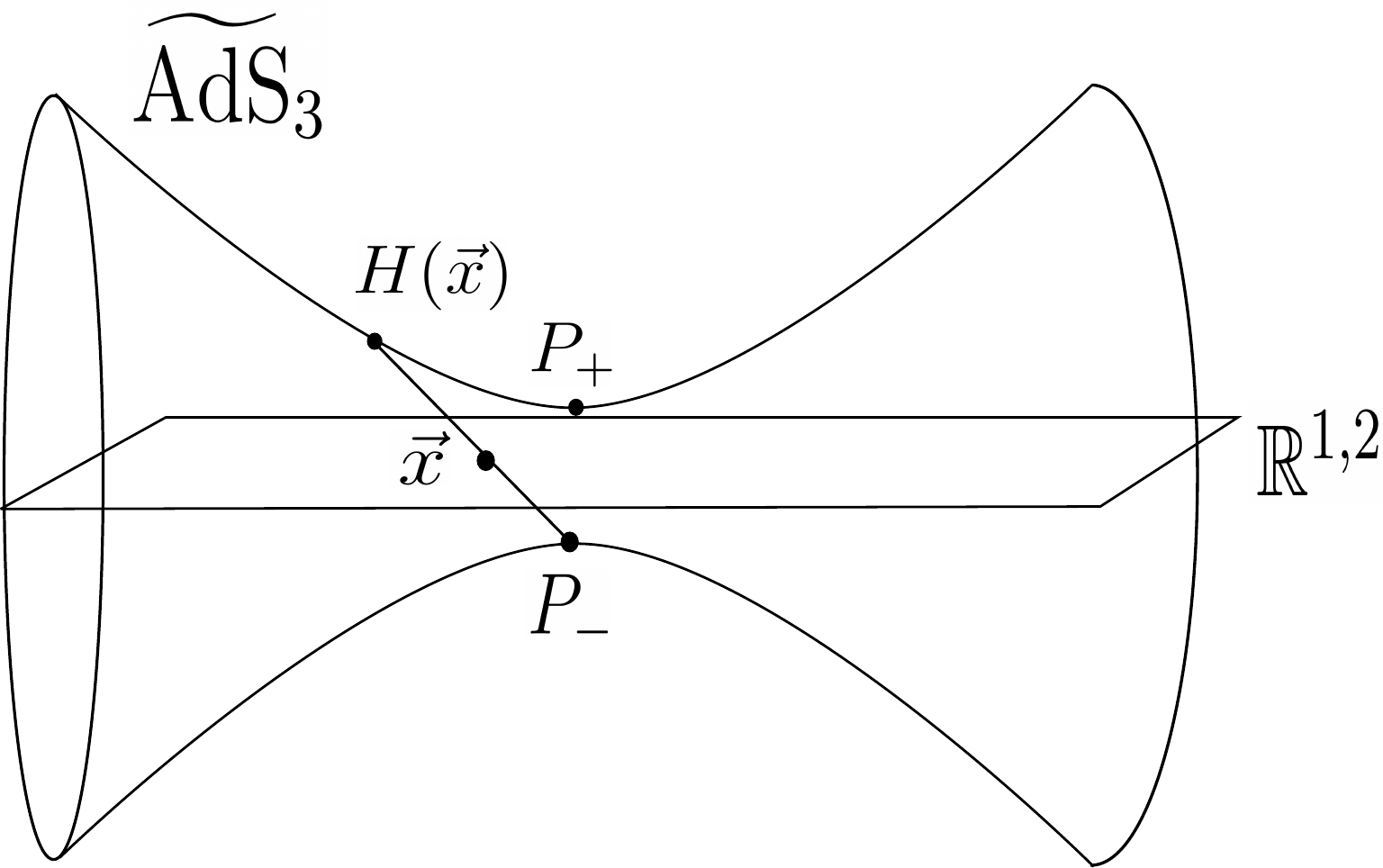}
 \caption{A schematic  picture of the stereographic projection from $\AAdS$ to $\R^{1,2}$, with one dimension suppressed;  we used the notation introduced in  \eqref{Pdef}}
 \label{ads proj figure}
\end{figure}

For calculations, it is helpful to express  some of these maps in matrix notation.
Using $
\vec{t}=(t^{0},t^{1},t^{2})^{t} $, we  
identify the point $(y^{0},y^{1},y^{2},y^{3})\in \AAdS$ with the $SU(1,1)$ matrix
\begin{equation}
M(y^{0},y^{1},y^{2},y^{3})=\frac 1 \ell (y^{3}\mathbb{I}+2\vec{y}\cdot\vec{t}). \label{SU(1,1) matrix from R4}
\end{equation}
Up to scale,  the inverse stereographic projection can then be written as 
\begin{align}
H:	\Il \subset\R^{1,2}&\to SU(1,1), \nonumber\\ 
	\vec{x}&\mapsto \frac{\ell^{2}-r^{2}}{\ell^{2}+r^{2}}\mathbb{I}+\frac{4\ell}{\ell^{2}+r^{2}}\vec{x}\cdot\vec{t} =\frac{1}{\ell^{2}+r^{2}}\begin{pmatrix}
	\ell^{2}-r^{2} +2i\ell x^{0}&2i\ell (x^{1}-ix^{2}) \\
	 -2i\ell(x^{1}+ix^{2})&  \ell^{2}-r^{2}-2i\ell x^{0}
	\end{pmatrix}. \label{scaled stereo}
\end{align}
In stereographic coordinates,  the  bundle projection $\pi:SU(1,1) \rightarrow H^2$  therefore becomes
\begin{equation}
 \lh \circ H : \vec{x} \mapsto \frac{z_2}{z_1}=-i\frac{2\ell(x^{1}+ix^{2})}{\ell^{2}-r^{2}+2i\ell x^{0}}.
\end{equation}
We  note that, in terms of \eqref{SU(1,1) matrix from R4}, the  condition $y^{3}>-\ell$ can be written as  $
\text{tr}(M)>-2$, and that matrices  which satisfy this lie in the image of the exponential map.

We will also  need a  Lorentzian version of the so-called gnomonic projection discussed  and used in \cite{RS}. This is the  map $G:\Il\subset \R^{1,2}\to SU(1,1)$ given by
\begin{equation}
G: \vec{x} \mapsto  \frac{1}{\sqrt{\ell^{2}+r^{2}}}\left(\ell\mathbb{I}+2\vec{x}\cdot\vec{t}\right), \label{gnomonic matrix}
\end{equation}
which satisfies $G^{2}=H$. 
The geometric interpretation of this result is shown  in Fig.~\ref{poincare_beltrami_maps2} and explained in the caption.
Note also   the map $H$ is an AdS version of  the  projection relating a sheet of the two-sheeted hyperboloid  to the disk in the Poincar\'e model. The map $G$ is the AdS analogue of  the map to the Beltrami disk model. 

\begin{figure}[t]
\centering
  \includegraphics[width=7truecm]{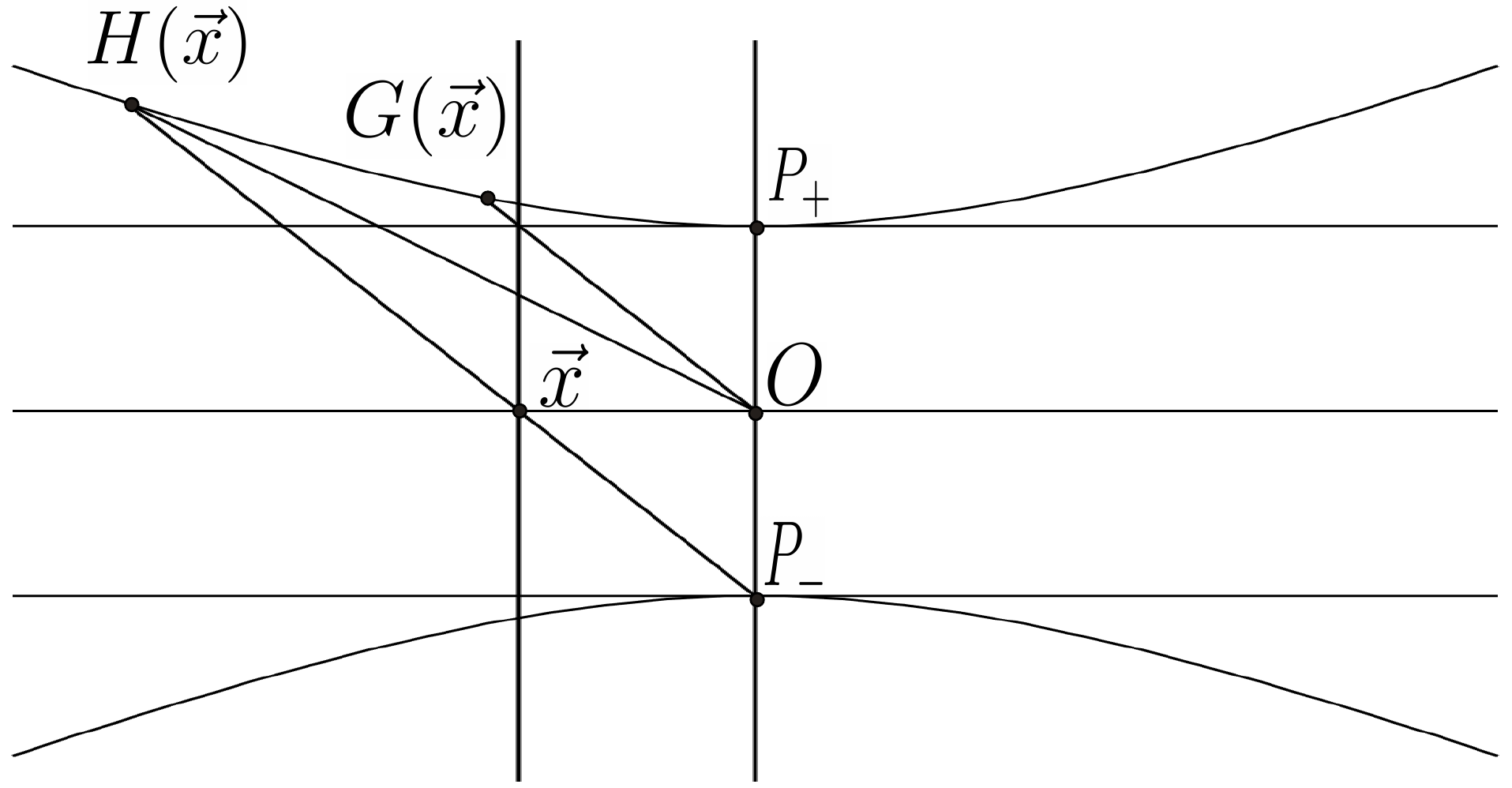}
      \caption{The  lines  $OG(\vec{x})$ and $P_-H(\vec{x})$  define  the maps $G$  and $H$. They are analogous  to, respectively, the   Beltrami  and Poincar\'e map in  hyperbolic geometry.  The area bounded by the hyperbolic segment $P_+H(\vec{x})$ and the straight lines $OH(\vec{x})$ and $OP_+$  is twice that  bounded by the hyperbolic segment $ P_+G(\vec{x})$ and the lines     $OG(\vec{x})$ and $OP_+$. This is the geometry behind the relation $H(\vec{x})=G^2(\vec{x})$.}
 \label{poincare_beltrami_maps2}
\end{figure}

By expanding $H^{-1}dH$ we define one-forms, $\vartheta_{i}$ $i=0,1,2$, on $\Il\subset \R^{1,2}$ via
\begin{equation}
H^{-1}dH=\frac{1}{\Omega}\vec{\vartheta}\cdot \vec{t}, \quad \vartheta^{i}=\Omega H^{*}\lf{i},
\end{equation}
where $\vec{\vartheta}=(\vartheta^{0},\vartheta^{1},\vartheta^{2})^t$ and we used the  scale factor 
\begin{equation}
\Omega=\frac{\ell^{2}+r^{2}}{4\ell}.
\end{equation}
We find that
\begin{equation}
\vec{\vartheta}\cdot \vec{t}= \frac{1}{\ell^{2}+r^{2}}\left(2(\vec{x}\cdot \vec{t})(\vec{x}\cdot d\vec{x})+(\ell^{2}-r^{2})d\vec{x}\cdot \vec{t}-2\ell(\vec{x}\times d\vec{x})\cdot \vec{t}\right)=G^{-1}(d\vec{x}\cdot \vec{t}) G.
\end{equation}
This means that the $\vartheta^{i}$, $i=0,1,2$,  give a Lorentz-rotated basis for the cotangent space of $\Il$.
\begin{lem}
The pullbacks of the Maurer-Cartan one-form via $G$ and $H $ are related through
\begin{equation}
H^{-1}dH=G^{-1}dG+G^{-1}(G^{-1}dG)G, \label{HG}
\end{equation}
with the inverse relation expressed as 
\begin{equation}
G^{-1}dG=\frac{1}{2}H^{-1}dH-\star\left(d\Omega\wedge H^{-1}dH\right).\label{GH}
\end{equation}
\end{lem}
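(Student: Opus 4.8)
The first identity \eqref{HG} requires no geometry. Since $H=G^{2}$ we have $dH=(dG)G+G(dG)$, and therefore
\bee
H^{-1}dH=G^{-2}(dG)G+G^{-1}dG=G^{-1}dG+G^{-1}(G^{-1}dG)G,
\eee
which is \eqref{HG}. All the work is thus in the inversion \eqref{GH}.

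My plan is to compute both Maurer--Cartan pullbacks explicitly as $\vec{t}$-valued one-forms on $\Il$ and then reduce \eqref{GH} to a vectorial identity on $\R^{1,2}$. From $\gamma_{i}=2t_{i}$ and $\{\gamma_{i},\gamma_{j}\}=-2\eta_{ij}$ one has $t_{i}t_{j}=-\tfrac14\eta_{ij}\mathbb{I}+\tfrac12\varepsilon_{ij}{}^{k}t_{k}$, so that $(\bx\cd\vec{t})^{2}=-\tfrac14 r^{2}\mathbb{I}$ and $G^{-1}=(\ell^{2}+r^{2})^{-1/2}(\ell\mathbb{I}-2\bx\cd\vec{t})$. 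Differentiating \eqref{gnomonic matrix} (using $d\sqrt{\ell^{2}+r^{2}}=(\bx\cd d\bx)/\sqrt{\ell^{2}+r^{2}}$) and collecting terms, the scalar parts cancel, as they must, and I expect the compact form
\bee
G^{-1}dG=\frac{1}{2\ell\Omega}\left(\ell\,d\bx-\bx\times d\bx\right)\cd\vec{t},
\eee
while $H^{-1}dH=\Omega^{-1}\,\vec{\vartheta}\cd\vec{t}$ is already available with $\vec{\vartheta}$ given explicitly above.

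Because $d\Omega=\tfrac{1}{2\ell}\,\bx\cd d\bx$, the star term reads $\star(d\Omega\wedge H^{-1}dH)=\tfrac{1}{2\ell\Omega}\,\star\!\big((\bx\cd d\bx)\wedge\vec{\vartheta}\big)\cd\vec{t}$. The key tool is the componentwise Lorentzian Hodge relation $\star\!\big((\bx\cd d\bx)\wedge dx^{k}\big)=-(\bx\times d\bx)^{k}$, which I would verify directly from the orientation $dx^{0}\wedge dx^{1}\wedge dx^{2}$. Applied to the three tensor structures making up $\vec{\vartheta}$ (the term proportional to $\bx$ contributes to neither side, as $(\bx\cd d\bx)\wedge(\bx\cd d\bx)=0$ and $\bx\times\bx=0$), this gives $\star\!\big((\bx\cd d\bx)\wedge\vec{\vartheta}\big)=-\,\bx\times\vec{\vartheta}$, so \eqref{GH} collapses to the purely algebraic statement $\ell\,d\bx-\bx\times d\bx=\ell\,\vec{\vartheta}+\bx\times\vec{\vartheta}$. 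Inserting $\vec{\vartheta}$ and using the triple-product rule $\bx\times(\bx\times d\bx)=(\bx\cd d\bx)\,\bx-r^{2}\,d\bx$, the $(\bx\cd d\bx)\,\bx$ terms cancel and the coefficients of $d\bx$ and of $\bx\times d\bx$ match, completing the proof.

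The two matrix differentiations and the final coefficient comparison are routine. The step I expect to be the genuine obstacle is fixing all signs in Lorentzian signature, in particular the sign in $\star((\bx\cd d\bx)\wedge dx^{k})=-(\bx\times d\bx)^{k}$ and in the triple-product rule; an error in either propagates straight into the final matching. I would therefore pin both down first by an explicit frame computation against the fixed orientation, and only then run the coefficient comparison.
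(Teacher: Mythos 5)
Your proposal is correct and follows essentially the same route as the paper: \eqref{HG} is obtained algebraically from $H=G^{2}$, and \eqref{GH} is verified by computing $G^{-1}dG=\frac{2}{\ell^{2}+r^{2}}\left(\ell\, d\vec{x}-\vec{x}\times d\vec{x}\right)\cdot\vec{t}$ explicitly and using the Lorentzian Hodge identity $\star\left(\vec{x}\cdot d\vec{x}\wedge d\vec{x}\cdot\vec{t}\right)=(d\vec{x}\times\vec{x})\cdot\vec{t}$ together with $2d\Omega=\vec{x}\cdot d\vec{x}/\ell$. The only cosmetic difference is that the paper first substitutes \eqref{HG} into \eqref{GH} so that the identity to check involves $dG\,G^{-1}\pm G^{-1}dG$ (avoiding the triple-product step), whereas you work directly with $\vec{\vartheta}$; both reduce to the same sign-sensitive computation, and your flagged signs are the correct ones.
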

\begin{proof}
The proof follows by a calculation which is similar to the corresponding  Euclidean version in  \cite{RS}, but differs  in  important signs.
The first statement \eqref{HG} follows from the fact that $H=G^{2}$. For \eqref{GH} we use \eqref{HG} to write it as
\begin{equation}
\star\left(2d\Omega\wedge\left(dGG^{-1}+G^{-1}dG\right)\right)=-\left(dGG^{-1}-G^{-1}dG\right). \label{GH midway}
\end{equation}
Then we compute
\begin{equation}
G^{-1}dG=\frac{2\left(\ell d\vec{x}\cdot \vec{t}-(\vec{x}\times d\vec{x})\cdot \vec{t}\right)}{\ell^{2}+r^{2}},
\end{equation}
which gives us that
\begin{equation}
dGG^{-1}+G^{-1}dG=\frac{4\left(\ell d\vec{x}\cdot \vec{t}\right)}{\ell^{2}+r^{2}}, \qquad dGG^{-1}-G^{-1}dG=\frac{4(\vec{x}\times d\vec{x})\cdot \vec{t}}{\ell^{2}+r^{2}}. \label{GdG inverse }
\end{equation}
With $2d\Omega=\frac{\vec{x}\cdot d\vec{x}}{\ell}$ and
\begin{equation}
\star\left(\vec{x}\cdot d\vec{x}\wedge d\vec{x}\cdot \vec{t}\right)=(d\vec{x}\times \vec{x})\cdot \vec{t}
\end{equation}
we get \eqref{GH midway} and hence \eqref{GH}.
\end{proof}

\subsection{Magnetic Dirac operators}
On $SU(1,1)$, a global gauge potential for the spin connection is given by
\begin{equation}
\Gamma_{SU(1,1)}=-\frac{1}{8}[\gamma_{i},\gamma_{i}]\omega^{ij}=\frac{1}{2}h^{-1}dh.
\end{equation} 
The Dirac operator in the left-invariant frame is then 
\begin{align}
\slashed{D}_{SU(1,1)}=\frac{4}{\ell}t^{i}X_{i}-\frac{3}{2\ell}\mathbb{I} =\frac{2i}{\ell}
\begin{pmatrix}
						X_{0}&-X_{-}\\
						X_{+}&-X_{0}
						\end{pmatrix} -\frac{3}{2\ell}\mathbb{I}.
\end{align}
Minimal  coupling to an abelian gauge potential $A$ yields
\begin{align}
\label{DiraconAdS}
\slashed{D}_{SU(1,1),A}	=\frac{4}{\ell}t^{i}(X_{i}+iA_{i})-\frac{3}{2\ell}\mathbb{I}
						=\frac{2i}{\ell}\begin{pmatrix}
						X_{0}+iA_{0}&-(X_{-}+iA_{-})\\
						X_{+}+iA_{+}&-(X_{0}+iA_{0})
						\end{pmatrix} -\frac{3}{2\ell}\mathbb{I}.
\end{align}

The  Dirac operator on $\R^{1,2}$  minimally coupled to $\vec{A}\cdot d\vec{x}$ is 
\begin{equation}
\slashed{D}_{\R^{1,2},A}=2t^{i}(\partial_{i}+iA_{i}). \label{mixed sig dirac}
\end{equation}

In the following,  spinors $\Psi$   which satisfy the massless Dirac equation $\slashed{D}_A\Psi =0$  on either $SU(1,1)$ or $\R^{1,2}$ coupled to an abelian gauge potential are called  magnetic Dirac  modes, or simply magnetic modes. The following  Lemma exhibits the relation between magnetic Dirac modes on  $SU(1,1)$ and  $\R^{1,2}$.
\begin{lem}\label{zero-modes relation}
If $\Psi:SU(1,1)\to \C^{1,1}$ is a magnetic mode of the Dirac operator on $SU(1,1)$ coupled to the $U(1)$ gauge field $A$ then
\begin{equation}
\Psi_{H}=G\Omega^{-1}H^{*}\Psi
\end{equation}
is a magnetic mode of the Dirac operator \eqref{mixed sig dirac} on $\Il\subset\R^{1,2}$ coupled to the gauge potential $H^{*}A$.
\end{lem}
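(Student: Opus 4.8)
The plan is to read the statement as the conformal covariance of the massless Dirac operator, with the stereographic map $H$ supplying the conformal factor $\Omega$ and the gnomonic map $G$ from \eqref{gnomonic matrix} implementing the accompanying rotation of the spin frame. The geometric input is already available in the preceding material: since $\vartheta^{i}=\Omega\,H^{*}\sigma^{i}$ and $\vec\vartheta\cdot\vec t=G^{-1}(d\vec x\cdot\vec t)G$, the pulled-back left-invariant coframe is, up to the scalar factor $\Omega^{-1}$, the Minkowski coframe $dx^{i}$ Lorentz-rotated by $\Lambda$, where $G^{-1}t_{i}G=\Lambda^{a}{}_{i}t_{a}$. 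Because $\Lambda=\mathrm{Ad}(G^{-1})$ preserves the Killing form $\eta$, I would first extract the algebraic identity $\sum_{i}t^{i}\Lambda^{a}{}_{i}=G\,t^{a}G^{-1}$, which says precisely that $G$ intertwines the Clifford contraction computed in the two frames.

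With this in hand I would substitute $\Psi_{H}=G\Omega^{-1}H^{*}\Psi$, writing $\tilde\Psi=H^{*}\Psi=\Psi\circ H$, into the flat coupled operator \eqref{mixed sig dirac} and expand by Leibniz, noting that the gauge term is scalar and so commutes with $G\Omega^{-1}$:
\[
\slashed D_{\R^{1,2},H^{*}A}\Psi_{H}=2t^{i}\bigl(\partial_{i}(G\Omega^{-1})\bigr)\tilde\Psi+2\Omega^{-1}t^{i}G\,(\partial_{i}+i(H^{*}A)_{i})\tilde\Psi.
\]
For the second group I would use the chain rule together with $d\Psi=\sigma^{a}(X_{a}\Psi)$ (the $X_{a}$ being dual to the $\sigma^{a}$) and $H^{*}\sigma^{a}=\Omega^{-1}\vartheta^{a}$ to write $\partial_{i}\tilde\Psi=\Omega^{-1}\Lambda^{a}{}_{i}(X_{a}\Psi)\circ H$ and likewise $(H^{*}A)_{i}=\Omega^{-1}\Lambda^{a}{}_{i}A_{a}\circ H$. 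The intertwining identity then collapses the frame indices and turns this group into $2\Omega^{-2}G\sum_{a}t^{a}(X_{a}+iA_{a})\Psi\circ H$, which by \eqref{DiraconAdS} equals $\tfrac{\ell}{2}\Omega^{-2}G\,(\slashed D_{SU(1,1),A}\Psi)\circ H+\tfrac34\Omega^{-2}G\tilde\Psi$.

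The first, inhomogeneous group is a short explicit computation from \eqref{gnomonic matrix}: with $G\Omega^{-1}=4\ell(\ell^{2}+r^{2})^{-3/2}(\ell\mathbb{I}+2\vec x\cdot\vec t)$ and the Clifford identities $t^{i}t_{i}=-\tfrac34\mathbb{I}$ and $(\vec x\cdot\vec t)^{2}=-\tfrac14 r^{2}\mathbb{I}$, one finds $2t^{i}\partial_{i}(G\Omega^{-1})=-\tfrac34\Omega^{-2}G$. Adding the two groups, the two $\tfrac34\Omega^{-2}G\tilde\Psi$ contributions cancel identically and leave $\slashed D_{\R^{1,2},H^{*}A}\Psi_{H}=\tfrac{\ell}{2}\Omega^{-2}G\,(\slashed D_{SU(1,1),A}\Psi)\circ H$, which vanishes because $\Psi$ is a magnetic mode on $SU(1,1)$.

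The main obstacle is the bookkeeping of this scalar cancellation: the constant $-\tfrac{3}{2\ell}\mathbb{I}$ in \eqref{DiraconAdS} is not incidental but is exactly the value for which the $\Omega$-derivative produced by $\partial_{i}(G\Omega^{-1})$ cancels the curvature shift carried along from the $SU(1,1)$ operator. Pinning down the coefficient $-\tfrac34$ and matching it against the $+\tfrac34$ is where the Lorentzian signs flagged after the preceding Lemma must be tracked carefully, since here $t^{i}t_{i}<0$ rather than $>0$ as in the Euclidean computation of \cite{RS}. Once the intertwining identity $\sum_{i}t^{i}\Lambda^{a}{}_{i}=G\,t^{a}G^{-1}$ and this cancellation are in place, the conclusion is immediate.
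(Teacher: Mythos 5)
Your proof is correct and is essentially the paper's argument run in the opposite direction: the paper conjugates the curved operator $d+\tfrac12 H^{-1}dH+iH^{*}A$ into the flat one using $H=G^{2}$ and the cancellation $t^{i}\iota_{\partial_{i}}\bigl(\tfrac12(GdG^{-1}+G^{-1}dG)+\Omega^{-1}d\Omega\bigr)=0$, whereas you expand the flat operator on $G\Omega^{-1}H^{*}\Psi$ and cancel $2t^{i}\partial_{i}(G\Omega^{-1})=-\tfrac34\Omega^{-2}G$ against the $+\tfrac34$ coming from the spin-connection constant --- the same cancellation in different bookkeeping, with the frame rotation $\mathrm{Ad}(G^{-1})$ made explicit. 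Both routes establish the same intertwining identity $\slashed{D}_{\R^{1,2},H^{*}A}\Psi_{H}=\tfrac{\ell}{2}\Omega^{-2}G\,H^{*}\bigl(\slashed{D}_{SU(1,1),A}\Psi\bigr)$, from which the Lemma follows.
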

\begin{proof}
The pull-back of the spin connection is
\begin{equation}
H^{*}\Gamma_{SU(1,1)}=\frac{1}{2}H^{-1}dH.
\end{equation}
Using \eqref{HG} one can show that
\begin{equation}
d+\frac{1}{2}H^{-1}dH=\Omega G^{-1}\left(d+\frac{1}{2}\left(GdG^{-1}+G^{-1}dG\right) +\Omega^{-1}d\Omega \right)\Omega^{-1} G.
\end{equation}
Then combining \eqref{GdG inverse } with
\begin{equation}
\Omega^{-1}d\Omega=\frac{2\vec{x}\cdot d\vec{x}}{\ell^{2}+r^{2}},
\end{equation}
gives that
\begin{equation}
t^{i}\iota_{\partial_{i}}(\frac{1}{2}\left(G dG^{-1}+G^{-1}dG\right)+\Omega^{-1}d\Omega)=-\frac{2\vec{x}\cdot \vec{t}}{\ell^{2}+r^{2}}+\frac{2\vec{x}\cdot\vec{t}}{\ell^{2}+r^{2}}=0.
\end{equation}
Using these results we compute  the pull-back of the Dirac operator on $SU(1,1)$, coupled to both the spin connection and the abelian gauge potential $A$, to the flat frame in $\R^{1,2}$:
\begin{align}
\frac{\ell}{2}H^{*}\slashed{D}_{SU(1,1),A}	&=2t^{i}\iota_{H^{*}X_{i}}\left(d+\frac{1}{2}H^{-1}dH+iH^{*}A\right),\nonumber\\								&=\Omega^{-1}G^{-1}2t^{i}\iota_{\partial_{i}}G\left(d+\frac{1}{2}H^{-1}dH+iH^{*}A\right),\nonumber\\										&=G^{-1}2t^{i}\iota_{\partial_{i}}\left(d+iH^{*}A\right)\Omega^{-1}G.
\end{align}
This implies the claimed relation between the  magnetic modes of $\slashed{D}_{SU(1,1),A}$ and $\slashed{D}_{\R^{1,2},H^{*}A}$.
\end{proof}

\section{Magnetic Dirac modes from vortices}

\subsection{Dirac modes on \texorpdfstring{$SU(1,1)$}{SU(1,1)} } 

We now show how to obtain  magnetic Dirac modes on $SU(1,1)$ from  vortex configurations on $SU(1,1)$. As a warm-up, we consider a simpler construction of magnetic modes  from a holomorphic function $F:SU(1,1) \rightarrow \C$. 

\begin{prop} \label{linear magnetic modes}
Let $n\in \N$ and consider  the gauge potential 
\begin{equation}
A=-\frac{2n+1}{4}\lf{0}
\end{equation}
and the  homogeneous  function $F_{n}=\sum_{k=0}^{n-1}a_{k}z_{2}^{k}z_{1}^{n-1-k}$ on $SU(1,1)$.
Then the  spinor
\begin{equation}
\Psi=\begin{pmatrix}
F_{n}\\0
\end{pmatrix}
\end{equation}
solves the Dirac equation on $SU(1,1)$ minimally coupled to $A$.
\end{prop}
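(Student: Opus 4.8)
The plan is to substitute the candidate spinor directly into the coupled Dirac operator \eqref{DiraconAdS} and check that it is annihilated. First I would recall that with $A = -\frac{2n+1}{4}\lf{0}$ the only nonzero component of the gauge potential in the left-invariant frame is $A_0 = A(X_0) = -\frac{2n+1}{4}$, while $A_+ = A(X_+) = 0$ and $A_- = A(X_-) = 0$ because $\lf{0}(X_\pm)=0$ by the pairings \eqref{pairings}. Writing $\Psi = (F_n, 0)^t$, the Dirac equation $\slashed{D}_{SU(1,1),A}\Psi = 0$ then reduces, using the matrix form in \eqref{DiraconAdS}, to two scalar equations: from the first row, $(X_0 + iA_0)F_n - \frac{3}{4}F_n = 0$ (after clearing the $\frac{2i}{\ell}$ and $-\frac{3}{2\ell}\mathbb{I}$ factors), and from the second row, $(X_+ + iA_+)F_n = X_+ F_n = 0$.

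The second equation, $X_+F_n = 0$, is the statement that $F_n$ is holomorphic in the equivariant sense, and this is where the explicit form $F_n=\sum_{k=0}^{n-1}a_k z_2^k z_1^{n-1-k}$ does the work. Using the coordinate expression $X_- = \bar X_+ = -i(\bar z_1\partial_2 + \bar z_2\partial_1)$, one reads off $X_+ = i(z_1\bar\partial_2 + z_2\bar\partial_1)$; since $F_n$ depends only on $z_1,z_2$ and not on their conjugates, every $\bar\partial$ annihilates it, so $X_+F_n=0$ is immediate. The first equation is an eigenvalue condition for $X_0$: I would verify that $F_n$ has a definite equivariant degree. Applying $X_0 = \frac{i}{2}(z_2\partial_2 + z_1\partial_1 - \bar z_2\bar\partial_2 - \bar z_1\bar\partial_1)$ to the monomial $z_2^k z_1^{n-1-k}$, the Euler-type counting gives $z_2\partial_2 + z_1\partial_1$ acting as multiplication by the total degree $n-1$, while the barred terms vanish, so $X_0 F_n = \frac{i}{2}(n-1)F_n$. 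In the notation of \eqref{equivariant degree def} this says $2iX_0 F_n = -(n-1)F_n$, i.e.\ $F_n$ has equivariant degree $N = n-1$.

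It then remains to check that the two pieces fit together in the first-row equation. We have $iA_0 = i\cdot(-\frac{2n+1}{4}) = -\frac{(2n+1)i}{4}$ and $X_0 F_n = \frac{i(n-1)}{2}F_n$, so $(X_0 + iA_0 - \frac{3}{4})F_n = (\frac{i(n-1)}{2} - \frac{(2n+1)i}{4})F_n$; I would multiply through by $\frac{2i}{\ell}$ to combine with the $-\frac{3}{2\ell}$ term and confirm the total coefficient vanishes. The only subtlety, and the step most likely to need care, is bookkeeping the constant shift $-\frac{3}{2\ell}\mathbb{I}$ against the $X_0$ eigenvalue and the choice of $A_0$: the value $-\frac{2n+1}{4}$ is tuned precisely so that the eigenvalue of $\frac{2i}{\ell}(X_0+iA_0)$ on the first component cancels the $-\frac{3}{2\ell}$ curvature contribution. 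This is a short and routine computation once the holomorphicity $X_+F_n=0$ and the degree $X_0 F_n=\frac{i(n-1)}{2}F_n$ are in hand, so no genuine obstacle arises beyond keeping the numerical factors straight.
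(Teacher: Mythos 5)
Your proposal is correct and follows essentially the same route as the paper: show $X_+F_n=0$ by holomorphy, compute the equivariant degree $2iX_0F_n=(1-n)F_n$, and check that the eigenvalue of $\frac{2i}{\ell}(X_0+iA_0)$ cancels the $-\frac{3}{2\ell}$ term. (Only a cosmetic slip: after dividing the first-row equation by $\frac{2i}{\ell}$ the constant term is $+\frac{3i}{4}F_n$, not $-\frac{3}{4}F_n$, but your final verification $(X_0+iA_0)F_n=-\frac{3i}{4}F_n$ is exactly the paper's and closes the argument.)
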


\begin{proof}
First observe that 
\begin{equation}
2iX_{0}F_{n}=(1-n)F_{n},
\end{equation}
showing that  $F_{n}$ has equivariant degree  $n-1$.  Next note that $X_{+}F_{n}=0$ since $F_{n}$ is holomorphic, that $A(X_+)=0$
and that 
\begin{equation}
(X_{0}+iA(X_0))F_{n}=-\frac{3i}{4}F_{n}.
\end{equation}
Using this, and the explicit form of  $\slashed{D}_{SU(1,1),A}$ given in \eqref{DiraconAdS}, the  equation $\slashed{D}_{SU(1,1),A}\Psi =0$ 
   reduces to 
\begin{equation}
\frac{2i}{\ell}(X_{0}+iA_{0})F_{n}-\frac{3}{2\ell}F_{n}=0
\end{equation}
so $\Psi$ is indeed a magnetic mode.
\end{proof}

The following  Definition and Theorem are  similar to the  Euclidean version considered in  \cite{RS}. However, the non-linear equation in the definition of a vortex magnetic mode has an important overall sign difference.
\begin{dfn}
A pair $(\Psi, A)$ of a spinor $\Psi$ and a one-form $A$ on $SU(1,1)$ is said to be a vortex magnetic mode of the Dirac equation on $SU(1,1)$ if
\begin{equation}
\label{magneticmodedef}
\slashed{D}_{SU(1,1),A}\Psi=0, \quad F_{A}=-\frac{4i}{\ell}\star\Psi^{\dagger}h^{-1}dh\Psi -\frac{1}{4}\lf{1}\wedge\lf{2},
\end{equation}
with $\star$ the Hodge star operator on $SU(1,1)$ with respect to the metric \eqref{ads metric} and orientation \eqref{ads orientation}.
\end{dfn}
We now give a result that enables the construction of  magnetic modes from any vortex configuration.
\begin{thm} \label{vortex config modes}
Let $(\Phi, A)$ be a vortex configuration on $SU(1,1)$. Then the pair $(\Psi,A')$, where
\begin{equation}
\Psi=\begin{pmatrix}
\Phi\\0
\end{pmatrix}, \quad A'=-A-\frac{3}{4}\lf{0}, \label{magnetic mode from vortex}
\end{equation}
is a vortex magnetic mode on $SU(1,1)$.
\end{thm}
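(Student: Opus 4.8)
The statement requires checking two things: that $(\Psi, A')$ solves the gauged Dirac equation $\slashed{D}_{SU(1,1),A'}\Psi = 0$, and that it satisfies the non-linear spin-density constraint of \eqref{magneticmodedef}. The plan is to treat these separately, exploiting that $\Psi$ has only an upper component $\Phi$ and that $A'$ differs from $-A$ only by a multiple of $\lf{0}$, so that $A'(X_\pm) = -A(X_\pm)$ while $A'(X_0) = -A(X_0) - \tfrac34$ (recall $\lf{0}(X_\pm)=0$ by \eqref{pairings}).

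For the Dirac equation I would substitute $\Psi = (\Phi, 0)^t$ and the components of $A'$ directly into the explicit matrix form \eqref{DiraconAdS}. The equation then splits into two scalar conditions. The lower entry (the $t_+$-row) reads $(X_+ + iA'(X_+))\Phi = (X_+ - iA(X_+))\Phi = 0$; I would obtain this by contracting the first vortex equation $(d\Phi - iA\Phi)\wedge\sigma = 0$ with the pair $(X_+, X_-)$ and using the pairings \eqref{pairings}, exactly as indicated below \eqref{equivariance}. The upper entry reduces to $\tfrac{2i}{\ell}(X_0 + iA'(X_0))\Phi - \tfrac{3}{2\ell}\Phi = 0$; here I would first invoke the equivariance condition $iX_0\Phi = -A(X_0)\Phi$ from \eqref{equivariance} (itself a contraction of the first vortex equation with $(X_0, X_-)$) to collapse $(X_0 + iA'(X_0))\Phi$ to $-\tfrac{3i}{4}\Phi$, after which the two terms cancel identically. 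Thus the Dirac equation is a direct repackaging of the holomorphicity (first) vortex equation together with the built-in equivariance.

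For the non-linear constraint I would compute both sides independently. On the left, $F_{A'} = dA' = -F_A - \tfrac34\, d\lf{0}$; substituting the second vortex equation $F_A = -\tfrac{i}{2}(|\Phi|^2 - 1)\sigma\wedge\bar\sigma$ together with the structure equation $d\lf{0} = \tfrac{i}{2}\bar\sigma\wedge\sigma$, and rewriting $\sigma\wedge\bar\sigma = -2i\,\lf{1}\wedge\lf{2}$, should yield $F_{A'} = |\Phi|^2\,\lf{1}\wedge\lf{2} - \tfrac14\,\lf{1}\wedge\lf{2}$. On the right, since only $t_0$ has a non-vanishing diagonal sandwich against $(\Phi, 0)^t$, the bilinear collapses to $\Psi^{\dagger} h^{-1}dh\,\Psi = \tfrac{i}{2}|\Phi|^2\,\lf{0}$; applying the Lorentzian Hodge star determined by \eqref{ads metric} and \eqref{ads orientation}, which sends $\lf{0}$ to a multiple of $\lf{1}\wedge\lf{2}$, and multiplying by $-\tfrac{4i}{\ell}$ reproduces the $|\Phi|^2\,\lf{1}\wedge\lf{2}$ term, while the remaining $-\tfrac14\,\lf{1}\wedge\lf{2}$ in \eqref{magneticmodedef} matches the constant piece on the left.

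The main obstacle is the factor-and-sign bookkeeping in this second half. In particular one must compute $\star\lf{0}$ correctly in signature $(+,-,-)$ with the orientation \eqref{ads orientation}, and track the $\ell$-dependence consistently between the normalisation of $\slashed{D}_{SU(1,1),A'}$ and the volume form, so that all powers of $\ell$ and factors of $i$ cancel to leave precisely the coefficients $|\Phi|^2$ and $-\tfrac14$. This is exactly where the \emph{important overall sign difference} from the Euclidean case of \cite{RS} enters, and is the step I would verify most carefully.
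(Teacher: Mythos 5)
Your proposal is correct and follows essentially the same route as the paper: the Dirac equation is reduced to the two scalar conditions obtained from the holomorphicity equation (contracted with $(X_+,X_-)$) and the equivariance condition \eqref{equivariance}, and the non-linear constraint is verified by computing $F_{A'}=(|\Phi|^2-\tfrac14)\,\lf{1}\wedge\lf{2}$ and matching it against $-\tfrac{4i}{\ell}\star\Psi^{\dagger}h^{-1}dh\,\Psi-\tfrac14\lf{1}\wedge\lf{2}$ exactly as the paper does. The sign and $\ell$-bookkeeping you flag in the Hodge-star step is indeed the only delicate point, and your stated outcome agrees with the paper's.
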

\begin{proof}
The spinor is  a magnetic mode of $\slashed{D}_{SU(1,1),A'}$ if
\begin{equation}
\left(iX_{0}-A'-\frac{3}{4} \right)\Phi=0 \quad \text{ and } \quad X_{+}\Phi+iA'_{+}\Phi=0.
\end{equation}
Now $A'_{0}=A'(X_{0})=-A(X_{0})-\frac{3}{4}$ so the first equation follows from the equivariance condition \eqref{equivariance}. The second follows from \eqref{ads vortex equations} since $A'(X_{+})=-A(X_{+})$ and contracting \eqref{ads vortex equations} with $(X_{+},X_{-})$ leads to 
\begin{equation}
X_{+}\Phi-iA(X_{+})\Phi=0.
\end{equation}
For the non-linear equation with a spinor of the form given in \eqref{magnetic mode from vortex} we have that
\begin{equation}
\frac{4i}{\ell}\star\Psi^{\dagger}h^{-1}dh\Psi=\frac{4i}{\ell}\star|\Phi|^{2}\left(\frac{i}{2}\lf{0}\right)=-|\Phi|^{2}\lf{1}\wedge\lf{2}.
\end{equation}
On the other hand using \eqref{ads vortex equations} gives
\begin{equation}
F_{A'}=-F_{A}+\frac{3}{4}\lf{1}\wedge\lf{2}=\left(|\Phi|^{2}-\frac{1}{4}\right)\lf{1}\wedge\lf{2},
\end{equation}
from which the non-linear equation follows.
\end{proof}

Both the magnetic two-forms $F_{A'}$ and $F_{A}$ are proportional  to    $\lf{1}\wedge\lf{2}$, with a factor of proportionality  which is a function on  $SU(1,1)$.   The magnetic vector field associated to either of them via \eqref{2formvfield} is therefore similarly proportional to  the vector field $X_0$,   and so the magnetic fields are just  the fibres of the fibration $\lh:SU(1,1)\to H^{2}$. 

To  visualise these fibres in the embedding of $\AAdS$ in $\R^{2,2}$ (with one dimension suppressed), we  note that they are in particular geodesics on $\AAdS$ and therefore can be obtained by intersections of  the embedding \eqref{AdSinR4} with planes in $\R^{2,2}$. This was used to produce the picture of geodesics on AdS$_2$, embedded in  $\R^{2,1}$, in Fig.~\ref{geodesics figure}.

 We can write  down  geodesics  on $\AAdS$ explicitly by expressing the right action  of  $e^{\alpha t_{0}}$, $\alpha \in [0,4\pi)$, which generates them,  in  real coordinates.   Using the parametrisation \eqref{SU(1,1) matrix from R4} the orbit of a point 
$(y^{0},y^{1},y^{2},y^{3})\in \AAdS $ is 
\begin{equation}
\left(\left(y^{0}\cos\frac{\alpha}{2}+y^{3}\sin\frac{\alpha}{2}\right),\left(y^{1}\cos\frac{\alpha}{2}-y^{2}\sin\frac{\alpha}{2}\right),\left(y^{2}\cos\frac{\alpha}{2}+y^{1}\sin\frac{\alpha}{2}\right),\left(y^{3}\cos\frac{\alpha}{2}-y^{0}\sin\frac{\alpha}{2}\right)\right).
\end{equation}
In other words moving along the fibre is equivalent to a rotation by $\frac{\alpha}{2}$ in both the $y^{3},y^{0}$ and $y^{1},y^{2}$ plane. 
 
 \begin{figure}[!htbp]
\centering
\includegraphics[width=10truecm]{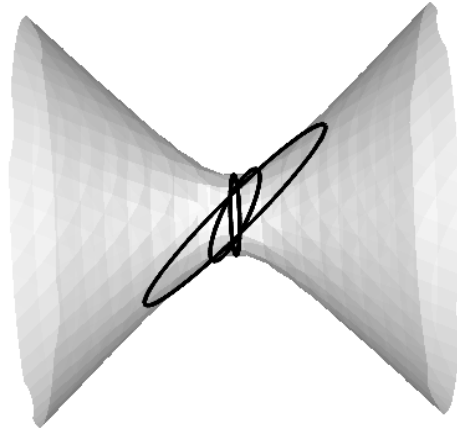}
\vspace{-4cm}
 \caption{Some geodesics on AdS$_2$}
 \label{geodesics figure}
\end{figure}

\subsection{Dirac modes on Minkowski space}
 In \cite{LY}, Loss and Yau used a particular formula to construct gauge potentials for a given spinor so that the spinor is a zero-mode of the Dirac  operator coupled to the gauge potential. This formula has a simple Lorentzian analogue, namely 
\begin{equation}
A_{i}=\frac{1}{|\vec{\Sigma}|}\big(\frac{1}{2}\varepsilon_{i}^{\; jk}\partial_{j}\Sigma_{k} +\text{Im}(\Psi^{\dagger}\partial_{i}\Psi) \big),  \label{Lorentzian loss and yau}
\end{equation}
 where $\Sigma_{i}=2i\Psi^{\dagger}t_{i}\Psi$. However, its use is problematic because Lorentzian spinors may be null even when they are not vanishing.

Our construction of 
 magnetic modes  proceeds differently. We  use Lemma \ref{zero-modes relation} to obtain  magnetic Dirac modes on $\Il\subset \R^{1,2}$ directly  from the vortex magnetic   modes \eqref{vortex config modes}  on $SU(1,1)$. 
 The  magnetic field in Minkwoski space  is  obtained from the magnetic field $F_{A'}$ on $SU(1,1)$ via pull-back with  the  inverse stereographic projection  $H$. The magnetic field lines are therefore the images, under stereographic projection, of the  fields lines on $SU(1,1)$.  While the field lines on $SU(1,1)$ are  all closed,  they also  leave the domain of the stereographic projection. As a result, the image curves in   $\Il$ are not closed.  Instead, they are of  the form shown in Fig.~\ref{integral curves fig}. 

For explicit formulae on Minkowski space, it  is convenient to work in vector notation where a one-form is expanded as $A=\vec{A}\cdot d\vec{x}$ on $\Il$,
and where magnetic two-forms are expressed in terms of vector fields according to \eqref{2formvfield}. In particular, 
the inhomogeneous term in  the equation \eqref{magneticmodedef}  governing  vortex magnetic  modes   pulls back to 
the two-form 
\begin{equation}
-\frac{1}{4}H^{*}(\lf{1}\wedge\lf{2})=-\frac{4\ell^{2}}{(\ell^{2}+r^{2})^{2}}\star_{\R^{1,2}}\vartheta^{0}=\frac{1}{2}\varepsilon_{ijk}b^{i}dx^{j}\wedge dx^{k}, 
\end{equation}
and the corresponding magnetic field is 
\begin{equation}
 \vec{b}=\frac{-4\ell^{2}}{(\ell^{2}+r^{2})^{3}}\begin{pmatrix}
\ell^{2}-r^{2}+2(x^{0})^{2}\\
2(x^{2}\ell-x^{1}x^{0})\\
-2(x^{1}\ell +x^{2}x^{0})
\end{pmatrix}. \label{background_field}
\end{equation}
The field lines of $\vec{b}$ are  the fibres of the fibration $\lh:SU(1,1)\to H^{2}$,  and  plotted  in Fig.~\ref{integral curves fig}. 

\begin{figure}
\begin{center}
\vspace{-1cm}
\includegraphics[width=8truecm]{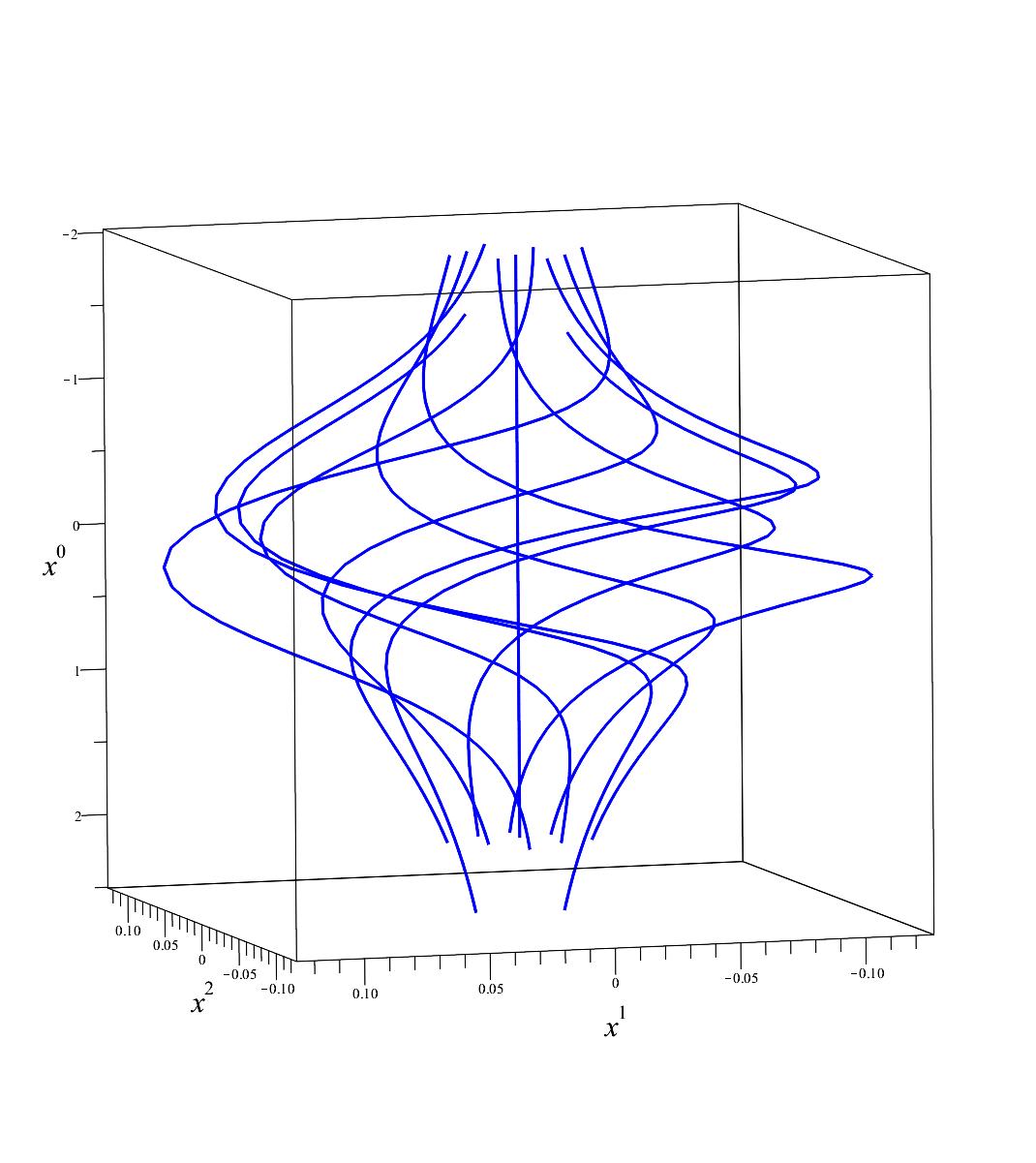}
\vspace{-1cm}
\end{center}
 \caption{The magnetic field lines for the pull-back of vortex magnetic modes to Minkowski space.  In particular, they  are the   magnetic fields lines of the background field $\vec{b}$. They are also the   images of the fibres illustrated  in Fig.~\ref{geodesics figure} under the stereographic projection.}
 
 \label{integral curves fig}
\end{figure}

Since vortex magnetic modes on $SU(1,1)$ satisfy a non-linear equation in addition to the linear Dirac equation, we expect the same to be true for the vortex magnetic modes on Minkowski space.  We  define them as follows.
\begin{dfn}
A pair $(\Psi, A)$ of a spinor $\Psi$ and a one-form $A=\vec{A}\cdot d \vec{x}$ on $\R^{1,2}$   is called a vortex  magnetic mode  in Minkowski space if it  satisfies the coupled equations
\begin{equation}
\slashed{D}_{\R^{1,2}, A}\Psi=0, \quad \vec{B}=-2i\Psi^{\dagger}\vec{t}\Psi+\vec{b}, \label{lorentzian cs eqs}
\end{equation}
where $\vec{B} = \nabla \times \vec{A}$ and $\vec{b}$ is the background field given in \eqref{background_field}
\end{dfn}
The coupled equations in this definition  formally  resemble the  dimensionally reduced Seiberg-Witten equations, perturbed by the background field $\vec{b}$. The role of the Seiberg-Witten equations in differential topology  makes essential  use of a Riemannian metric, and a Lorentzian version like the one defined here does not appear to have been studied.  

Combining many of the results derived in this paper, we arrive at the following explicit construction of vortex magnetic  modes on $\Il \subset  \R^{1,2}$: 
\begin{cor} \label{hol_functions_to_zero-modes}
Any given bundle map $V: SU(1,1) \rightarrow SU(1,1)$ covering a holomorphic map $f:H^2\rightarrow H^2$ determines a smooth vortex magnetic mode   on $\Il\subset \R^{1,2}$.  Explicitly, extracting the vortex configuration $(\Phi,A)$ on $SU(1,1)$ from $\mathcal{A}=V^{-1}dV$ via \eqref{A connection}, the vortex magnetic mode is given by
\bee
\Psi= G\begin{pmatrix}  \Omega^{-1} H^*\Phi \\ 0 \end{pmatrix} ,\quad A'_H= -H^*(A+\frac 34 \sigma^0).
\eee
\end{cor}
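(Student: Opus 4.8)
The plan is to assemble the claim by chaining the three main results already proved and then to check the one genuinely new ingredient, the nonlinear equation, by pulling it back along the stereographic projection.

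First I would read off the spinor and the gauge field. Starting from the bundle map $V$, Theorem \ref{vortex connection theorem} produces the flat connection $\mathcal{A}=V^{-1}dV$ and, via the decomposition \eqref{A connection}, the vortex configuration $(\Phi,A)$ on $SU(1,1)$. Theorem \ref{vortex config modes} then turns $(\Phi,A)$ into a vortex magnetic mode on $SU(1,1)$ with spinor $\begin{pmatrix}\Phi\\0\end{pmatrix}$ and gauge field $A'=-A-\tfrac34\sigma^0$, and in particular $\slashed{D}_{SU(1,1),A'}\begin{pmatrix}\Phi\\0\end{pmatrix}=0$. Finally Lemma \ref{zero-modes relation} maps this mode to $\Il\subset\R^{1,2}$: since $\Omega$ is scalar the spinor becomes $G\Omega^{-1}H^*\begin{pmatrix}\Phi\\0\end{pmatrix}=G\begin{pmatrix}\Omega^{-1}H^*\Phi\\0\end{pmatrix}$, while the coupled potential becomes $H^*A'=-H^*(A+\tfrac34\sigma^0)=A'_H$. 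This already reproduces the stated formulas, and the linear equation $\slashed{D}_{\R^{1,2},A'_H}\Psi=0$ is immediate from Lemma \ref{zero-modes relation}. Smoothness is inherited: $F_1\neq0$ makes $(\Phi,A)$ smooth on $SU(1,1)$ (cf. Lemma \ref{lifted connection lemma}), and $H$, $G$, $\Omega^{-1}$ are all smooth on $\Il$ because there $\Omega=(\ell^2+r^2)/4\ell>0$.

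It remains to verify the nonlinear equation in \eqref{lorentzian cs eqs}. For this I would pull back the $SU(1,1)$ nonlinear equation \eqref{magneticmodedef}; by the computation inside the proof of Theorem \ref{vortex config modes} this is equivalent to the clean statement $F_{A'}=(|\Phi|^2-\tfrac14)\sigma^1\wedge\sigma^2$. Applying $H^*$ gives $F_{A'_H}=|H^*\Phi|^2\,H^*(\sigma^1\wedge\sigma^2)-\tfrac14H^*(\sigma^1\wedge\sigma^2)$. The background term is already identified, since $-\tfrac14H^*(\sigma^1\wedge\sigma^2)$ is the two-form associated via \eqref{2formvfield} to the field $\vec b$ of \eqref{background_field}. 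Translating the other two terms into vector fields through \eqref{2formvfield}, the entire problem reduces to the single identity $|H^*\Phi|^2\,H^*(\sigma^1\wedge\sigma^2)=\iota_{-2i\Psi^\dagger\vec t\Psi}\,\mathrm{Vol}_{\R^{1,2}}$. The conformal weights here match automatically: using $\vartheta^i=\Omega H^*\sigma^i$ one has $H^*(\sigma^1\wedge\sigma^2)=\Omega^{-2}\vartheta^1\wedge\vartheta^2$, while $\Psi=\Omega^{-1}G\begin{pmatrix}H^*\Phi\\0\end{pmatrix}$ feeds a factor $\Omega^{-2}|H^*\Phi|^2$ into the bilinear, so the powers of $\Omega$ and the factors $|H^*\Phi|^2$ cancel on both sides. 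What survives is a purely algebraic statement about the frame rotation $G$, to be checked using $\vec\vartheta\cdot\vec t=G^{-1}(d\vec x\cdot\vec t)G$ and $G^\dagger=\tau_3G^{-1}\tau_3$ together with $\iota_X\mathrm{Vol}$.

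The hard part is exactly this final matching of the spin-density term. On $SU(1,1)$ the current of the upper-component spinor points purely along $\sigma^0$; under the gnomonic rotation $G$ it must be turned into a genuine three-vector in $\R^{1,2}$, a process governed by the adjoint action $\mathrm{Ad}_G$ realising the covering $SU(1,1)\to SO(1,2)$. The delicate feature, and the place where the Lorentzian story departs from the Euclidean one in \cite{RS}, is that the quadratic bilinear transforms as a \emph{real} Lorentz vector only when the indefinite spinor pairing is used: the $\tau_3$ in $G^\dagger=\tau_3G^{-1}\tau_3$ is precisely what makes $G^\dagger\tau_3 t_kG=\Lambda_k{}^{\,j}\tau_3 t_j$ hold, and it is invisible on $SU(1,1)$ only because $\tau_3$ fixes the upper-component spinor. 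Tracking these $\tau_3$ factors, the mostly-minus signature, and the orientation signs in $\iota_X\mathrm{Vol}$ and the Hodge star is where the real work, and all the sign subtleties flagged throughout the paper, is concentrated; once the rotation identity is confirmed the nonlinear equation, and hence the Corollary, follows.
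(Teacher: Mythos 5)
Your proposal follows the paper's proof exactly: the paper likewise just chains Theorem~\ref{vortex connection theorem}, Theorem~\ref{vortex config modes} and Lemma~\ref{zero-modes relation} to obtain the stated formulas for $\Psi$ and $A'_H$, and then declares the verification of the coupled equations \eqref{lorentzian cs eqs} to be ``a straightforward calculation'' analogous to the Euclidean case of \cite{RS}. Your extra outline of that calculation --- pulling back $F_{A'}=(|\Phi|^{2}-\tfrac14)\sigma^{1}\wedge\sigma^{2}$, matching the $\Omega^{-2}$ weights, and observing that the spin density transforms as a real Lorentz vector only via the indefinite pairing $G^{\dagger}\tau_{3}t_{i}G=\Lambda_{i}{}^{j}\tau_{3}t_{j}$, which is invisible on $SU(1,1)$ because $\tau_{3}$ fixes upper-component spinors --- is correct and in fact supplies detail (and flags a genuine Lorentzian subtlety) that the paper's own proof omits.
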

\begin{proof}
The  result follows by composing the construction of magnetic Dirac modes from vortex configurations with the construction of  vortex configurations from bundle maps.  We use  Theorem \eqref{vortex connection theorem} to construct a vortex configuration $(\Phi,A)$ on $SU(1,1)$ from the bundle map $V$, then Theorem \eqref{vortex config modes} to construct a vortex magnetic mode $(\Psi,A')$ on $SU(1,1)$  from $(\Phi,A)$. Finally Lemma \eqref{zero-modes relation} is used to pull it  back to $\Il$.  The confirmation that the magnetic mode thus obtained satisfies the coupled equations \eqref{lorentzian cs eqs} with  gauge field and magnetic field 
\bee
 A'_H= \vec{A}'_H\cdot d\vec{x}, \qquad \vec{B}'_H= \nabla \times \vec{A}'_H,
\eee
is a straightforward calculation, which is analogous to the one carried out 
for the   Euclidean  version  in \cite{RS}.
\end{proof}

The Corollary allows one to construct solutions of gauge Dirac equation and to solve initial value problems in Minkowski space. The restriction to $\Il\subset \R^{1,2}$ is not necessarily a problem in practice since $\ell$ can be chosen arbitrarily. By choosing it large enough, one can capture initial data on any bounded subset of a Cauchy surface. 

\section{Summary and outlook}
In this paper we presented a  lift of hyperbolic vortices to $\AAdS$ and  a  construction of  massless solutions of  magnetic Dirac equations on $\AAdS$ and  on a subset of  $\R^{1,2}$  from the vortices.  This  provides a new, three-dimensional interpretation of vortices and complements  the two-dimensional geometrical interpretation given by Baptista in \cite{Baptista} and the  four-dimensional interpretation as rotationally symmetric instantons \cite{Witten1,CD}. 

The summary diagram in  Fig.~\ref{summary}  gives a concise presentation of the spaces and equations that we considered here and the maps that relate them. The three dimensional   point of view  unifies  the massless Dirac modes and the hyperbolic vortices and clarifies  the geometry underlying this relationship.   
\begin{figure}[!htbp]
 \centering
\includegraphics[width=14truecm]{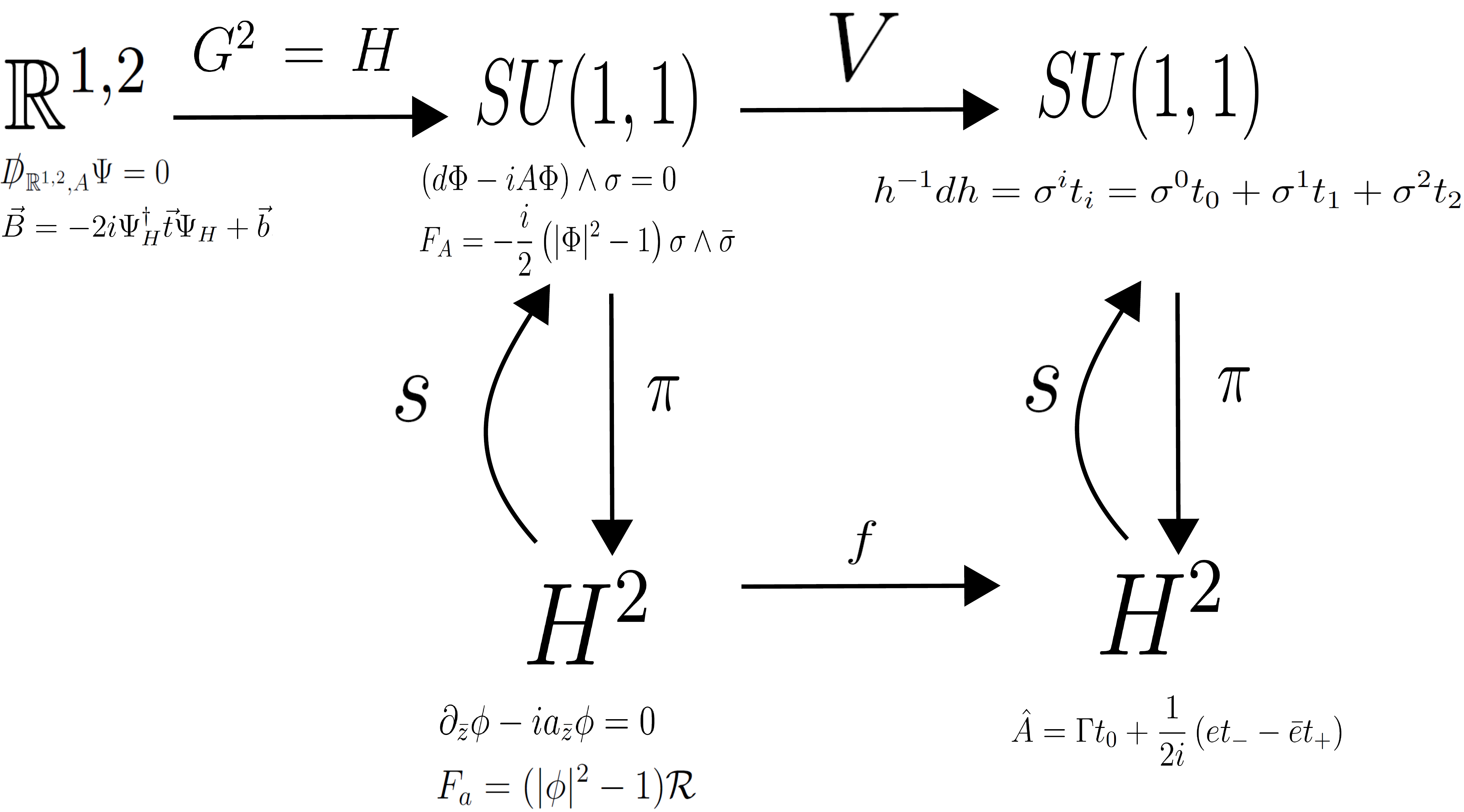}
 \caption{This Figure summarises all the spaces and equations considered here as well as the maps which relate them.}
 \label{summary}
\end{figure}

The story summarised in Fig.~\ref{summary} is  a Lorentzian and hyperbolic analogue of the Euclidean and spherical story told in \cite{RS}, but there important differences in the details. The triviality of $SU(1,1)$ as a circle bundle of $H^2$, as opposed to the non-triviality of the  Hopf bundle,    simplifies the topology   and allows for  global description of all sections. On  the other hand, the non-compactness of the base $H^{2}$, as opposed to the compactness of $S^2$,  leads to a wider variety of vortex configurations than in the Euclidean case, where vortices are in one-to-one correspondence with rational maps and the vortex number of any given configuration is finite.  

In the hyperbolic case, we have vortices  with a finite  vortex number, solved in terms of finite Blaschke products,  and configurations with  infinitely many zeros of the Higgs field. When  the latter are invariant under  a Fuchsian group  $\Gamma< SU(1,1)$ they lead to finite charge solutions of the vortex equations on the Riemann surface $H^2 / \Gamma$. The action of $\Gamma$ on $H^2$  descends from a left-action of $\Gamma$ on $SU(1,1)$  and therefore it makes sense to study vortex configurations  and spinors on $SU(1,1)$ which are left-invariant under $\Gamma$. However, the left-action of $\Gamma$ on $SU(1,1)$ does not, in general,  respect the domain of the  stereographic projection \eqref{stereo domain of definition}, so does not induce an action of $\Gamma$ on Minkowski space\footnote{The stereographic map intertwines the natural action of $\Gamma$ on Minkowski space with the conjugation action on $SU(1,1)$}. As result, there does not appear to be a natural characterisation of the  Dirac fields on  Minkowski space  obtained from hyperbolic vortices on  the Riemann surface $H^2 / \Gamma$. 

We end  with a brief outlook on interesting questions for further study. 
It seems very likely that all of the integrable vortex equations considered in \cite{Manton2} are amenable to a three dimensional interpretation along the lines of this paper and of \cite{RS}. However, the  cases studied here and in \cite{RS} show that there are  interesting differences in the details and in the interpretation, and these should be worked out. 

As observed  in \cite{CD}, all the integrable vortex equations considered in \cite{Manton2} can be seen as  dimensional reductions of the self-duality equations, with  gauge groups depending on the type of vortex. For example,  Popov vortices arise from $su(1,1)$-valued connections in four dimensions and hyperbolic vortices from $su(2)$-valued connections. The Cartan connections  we 
used are also non-abelian, but  exchange the Lie algebras, so use $su(2)$ for Popov vortices and $su(1,1)$ for hyperbolic vortices.  
It would be interesting to understand the link between the self-dual and the Cartan view point systematically for all the integrable vortices in \cite{Manton2}.

\noindent {\bf Acknowledgements} \, CR acknowledges an EPSRC-funded PhD studentship.


\begin{thebibliography}{100}


\bibitem{Baptista}
J~M. ~{Baptista}
\newblock {Vortices as degenerate metrics}.
\newblock {\em Letters in Mathematical Physics}, 104(6):731--747, 2014.

\bibitem{Popov}
A.~D. {Popov}.
\newblock {Integrable vortex-type equations on the two-sphere}.
\newblock {\em Physical Review D}, 86(10):105044, 2012.

\bibitem{Manton1}
N.~S. {Manton}.
\newblock {Vortex solutions of the Popov equations}.
\newblock {\em Journal of Physics A: Mathematical and Theoretical}, 46(14):145402, 2013.

\bibitem{RS}
  C.~Ross and B.~J. {Schroers}.
  \newblock{ Magnetic Zero-Modes, Vortices and Cartan Geometry}.
  \newblock{ \em Letters in Mathematical Physics}, 108: 949, 2018.

\bibitem{LY} M Loss and H-T Yau,
Stability of Coulomb Systems with Magnetic Fields
III. Zero Energy Bound States of the Pauli Operator.
\emph{Communications in Mathematical Physics}, 104:283-290, 1986. 
   
\bibitem{AMN1}
C.~{Adam}, B.~{ Muratori} and C.~ {Nash}
\newblock{Hopf instantons in Chern-Simons theory}.
\newblock{\em Physical Review D}, 61(10):105018, 2000.   
   
\bibitem{AMN2}
C.~{Adam}, B.~{Muratori}, and C.~{Nash}.
\newblock {Hopf instantons and the Liouville equation in target space}.
\newblock {\em Physics Letters B}, 479:329--335, 2000.   
   
\bibitem{MS}
N.~S. Manton and P.~M. Sutcliffe.
\newblock {\em Topological solitons.}
\newblock Cambridge Monographs on Mathematical Physics. Cambridge University
  Press, Cambridge, 2004.

\bibitem{Witten1}
E. Witten.
\newblock Some exact multipseudoparticle solutions of classical Yang-Mills theory.
\newblock {\em Physical Review Letters}, 38:121--124, 1977.




\bibitem{Manton2}
N.~S. {Manton}.
\newblock {Five vortex equations}.
\newblock {\em Journal of Physics A: Mathematical and Theoretical}, 50(12):125403, 2017.


  
    

\bibitem{MM}
R.~{Maldonado} and N.~S.~{Manton},
\newblock{Analytic vortex solutions on compact hyperbolic surfaces}.
\newblock{\em Journal of Physics A: Mathematical and Theoretical}, 48(24):245403, 2015.    
    
\bibitem{JT}
A.~Jaffe and C.~Taubes,
\newblock {\em Vortices and monopoles: Structure of static gauge theories}.
\newblock Progress in physics. Birkh{\"{a}}user Boston, 1980.

  
 

\bibitem{BaezMuniain}
J.~Baez and J.~P.~Muniain,
\newblock{ \em Gauge fields, knots and gravity}
\newblock{ World Scientific Publishing, Singapore, 1994.}

\bibitem{MR}
N.~S. {Manton} and N.~A. {Rink}.
\newblock{ Vortices on hyperbolic surfaces}.
\newblock{\em Journal of Physics A: Mathematical and Theoretical}, 43:434024 , 2010.



\bibitem{CD}
F.~Contatto and M.~Dunajski.
\newblock {Manton's five vortex equations from self-duality}.
\newblock {\em Journal of Physics A: Mathematical and Theoretical}, 50(37): 375201, 2017.  



\end{thebibliography}
\end{document}